\documentclass[11pt, letterpaper]{article}
\usepackage{amsmath,amssymb,amsfonts,amsthm}
\usepackage{algorithmic}
\usepackage{graphicx}
\usepackage{textcomp}
\usepackage{xcolor}
\usepackage[letterpaper,margin=1in]{geometry}
\usepackage[english]{babel}
\usepackage[numbers,sort&compress]{natbib}

\usepackage{amsmath,amssymb,array,color,colortbl,graphicx,multirow}
\usepackage{microtype}
\usepackage{enumitem}
\usepackage{algorithm2e}
\usepackage{listings,color}
\usepackage{verbatim}
\usepackage{multirow}
\usepackage{listings}
\usepackage{framed,color}
\usepackage{balance}
\usepackage{url}
\usepackage{hyperref}

\newtheorem{theorem}{Theorem}
\newtheorem{corollary}[theorem]{Corollary}
\newtheorem{lemma}[theorem]{Lemma}

\newcommand{\A}{\mathcal{A}}

\providecommand{\ie}{{i.e.,} }

\providecommand{\eg}{{e.g.,} }

\newcommand{\local}{\ensuremath{\mathsf{LOCAL}}}
\newcommand{\localp}[1]{\ensuremath{\mathsf{LOCAL}(#1)}}

\newcommand{\slocal}{\ensuremath{\mathsf{SLOCAL}}}
\newcommand{\pslocal}{\ensuremath{\mathsf{PSLOCAL}}}
\newcommand{\congest}{\ensuremath{\mathsf{CONGEST}}}

\newcommand{\supported}{\ensuremath{\mathsf{SUPPORTED}}}
\newcommand{\supportedp}[1]{\ensuremath{\mathsf{SUPPORTED}(#1)}}

\newcommand{\lcl}{\ensuremath{\mathsf{LCL}}}

\DeclareMathOperator{\poly}{poly}

\newenvironment{myabstract}
{\list{}{\listparindent 1.5em
		\itemindent    \listparindent
		\leftmargin    1cm
		\rightmargin   1cm
		\parsep        0pt}
	\item\relax}
{\endlist}

\newenvironment{mycover}
{\list{}{\listparindent 0pt
		\itemindent    \listparindent
		\leftmargin    1cm
		\rightmargin   1cm
		\parsep        0pt}
	\raggedright
	\item\relax}
{\endlist}

\newcommand{\myaff}[1]{\,$\cdot$\, {\small #1}\par\medskip}

\hypersetup{
	colorlinks=true,
	linkcolor=black,
	citecolor=black,
	filecolor=black,
	urlcolor=[rgb]{0,0.1,0.5},
	pdftitle={On the Power of Preprocessing in Decentralized Network Optimization},
	pdfauthor={Klaus-Tycho Foerster, Juho Hirvonen, Stefan Schmid, and Jukka Suomela}
}

\begin{document}

\begin{mycover}
	{\huge\bfseries\boldmath On the Power of Preprocessing in Decentralized Network Optimization \par}
	\bigskip
	\bigskip

	\textbf{Klaus-Tycho Foerster}
	\myaff{University of Vienna, Austria}

	\textbf{Juho Hirvonen}
	\myaff{Aalto University, Finland}

	\textbf{Stefan Schmid}
	\myaff{University of Vienna, Austria}

	\textbf{Jukka Suomela}
	\myaff{Aalto University, Finland}
\end{mycover}

\medskip
\begin{myabstract}
	\noindent\textbf{Abstract.}
As communication networks are growing at a fast pace,
the need for more scalable approaches to operate such
networks is pressing.
Decentralization and locality are key
concepts to provide scalability.
Existing models for which local algorithms are designed
fail to model an important
aspect of many modern communication networks
such as software-defined networks:
the possibility to precompute distributed network state.
We take this as an opportunity to study the
fundamental question of how and to what extent
local algorithms can benefit from preprocessing.
In particular, we show that preprocessing allows for significant speedups of various networking problems.
A main benefit is the precomputation of structural primitives, where purely distributed algorithms have to start from scratch.
Maybe surprisingly, we also show that there are strict limitations on how much preprocessing can help in different scenarios.
To this end, we provide approximation bounds for the maximum independent set problem---which however show that our obtained speedups are asymptotically optimal.
Even though we show that physical link failures in general hinder the power of preprocessing, we can still facilitate the precomputation of symmetry breaking processes to bypass various runtime barriers.
We believe that our model and results are of interest beyond the scope of this paper and apply to other dynamic networks as well.
\end{myabstract}

\vspace{5cm}
\begin{myabstract}
\noindent \copyright{} 2018 IEEE\@. This is the authors' version of a paper that will appear in the Proceedings of the IEEE International Conference on Computer Communications (INFOCOM 2019).
\end{myabstract}

\newpage

\section{Introduction}\label{sec:intro}

\subsection{Context: Decentralization for Scalability}

\emph{Locality}, the idea of avoiding  global collection of distributed network state and decentralizing the operation of networked systems,
is a fundamental design principle for scalability.
A more local operation cannot only reduce
communication overheads but also allow to
react faster to local events---an important aspect given
the increasingly stringent latency and dependability
requirements in future communication networks.
Given the quickly increasing scale of communication networks,
also due to the advent of new applications such as
the Internet-of-Things, the importance of
local approaches to networking is likely to
increase.

Designing local (\ie decentralized)
algorithms however can be challenging
and face a tradeoff: while a more
local network operation requires less
coordination (and
less overhead)
between neighboring domains,
hence improving scalability, a more limited
local view
may lead to suboptimal decision making:
compared to a global approach to network optimization,
a decentralized architecture may be subject to a
``price of locality''.

Local algorithms
have been studied intensively over the last decades,
and today, we have a fairly good understanding of
their opportunities and limitations~\cite{peleg00distributed,suomela13survey,DBLP:series/synthesis/2013Barenboim}.
Highly efficient distributed
algorithms are known for many network optimization
problems, sometimes even achieving an ``ideal
scalability'': their performance, \ie runtime,
is constant, \emph{independently} of the network size
(which could even be infinite)~\cite{suomela13survey}.
However, it is also known that for many fundamental network
optimization problems, \eg related to spanning tree~\cite{peleg2000near}
or shortest path~\cite{holzer2012optimal} computations, or minimizing congestion~\cite{keren-lowerbound},
designing good
decentralized optimization algorithms is impossible~\cite{kuhn2004cannot}:
in order to achieve non-trivial
approximations of the global optimum, non-local coordination
is required.

\subsection{Motivation: Improving Scalability with Preprocessing}

Our work is motivated by the observation
that existing models for the design of distributed
network algorithms, originally developed for ad-hoc
and sensor networks, do not account for
a key aspect of modern, large \emph{wired}
communication networks: the possibility
to \emph{preprocess} distributed network state.
In ad-hoc networks,
the network topology is typically assumed to be
unknown in the beginning,
and depend, \eg on the (unknown) node locations and
wireless communication channels.
The network topology hence
needs to be discovered in addition to performing
optimizations.
In contrast, most wired networks today have
a fairly static and known (to the operator)
network topology, which can hence be assumed
to be \emph{given}.
Network optimization algorithms here are mainly
concerned with the fast reaction
to new events, \eg finding optimal routes based on the
current traffic patterns.
While the topology of wired networks may change as well,
especially link additions but also link failures
happen less frequently and at different time scales.

This introduces an \emph{opportunity for preprocessing}
and calls for a radically new model:
decentralized and local algorithms reacting to
changes of the demand, network flows, or even failures,
may rely
on certain knowledge of the physical network topology,
based on which distributed
network state can be precomputed: this information
can later
``support'' local algorithms during their local
optimizations.

Indeed, enhancing classic, most basic
local coordination problems with preprocessing
appears to be a game changer:
distributed algorithms in traditional models
often require symmetry breaking mechanisms
whose complexity alone is in the same order
as solving the entire optimization problem~\cite{hotsdn13}.
With preprocessing, such symmetries
can trivially be broken \emph{ahead of time}.

\subsection{Case Study: Scalable SDNs}\label{subsec:SDNs}

As an example and case study, let us
consider emerging Software-Defined Networks (SDNs).
While there is now a wide consensus on the benefits
of moving toward more software-defined communication
networks, which are increasingly adopted
not only in datacenters~\cite{singh2015jupiter}
but also in wide-area
networks~\cite{jain2013b4,vahdat2015purpose}, one question becomes
increasingly pressing~\cite{feamster2013road}:
how to deploy SDNs \emph{at scale}?
In the near future, large-scale SDNs
may carry millions of flows and
span thousands of switches and
routers distributed across a large
geographic area.

A canonical solution to support large-scale deployments
while ensuring fast control plane reaction
to dataplane events (close to their origin),
is to partition the control plane~\cite{feamster2013road}
and \emph{leverage locality}:
different controller instances are made
responsible for a separate portion of the topology.
These controllers may then exchange e.g.
routing information with each other to ensure consistent
decisions. However,
SDN controllers face more general problems
than just routing, as they also need to support
advanced controller applications, including
involving distributed state
management.

Clearly, SDNs are very different from the models
usually studied in the context of ad-hoc networks:
an operator usually has full knowledge of the physical
connectivity of its networks, but needs clever algorithms
that allow distributed controllers to locally react
to new demands. In principle, these distributed controllers
can leverage such additional knowledge on the topology
and precompute distributed network state.

\subsection{Contributions}

Our work is motivated by the observation
that existing algorithmic locality models in
the literature are not a good fit
for modern large-scale wired networks such
as SDNs whose distributed control plane may
leverage precomputed network state.
Accordingly, we investigate a novel model
in which distributed decision making can
be supported by centralized preprocessing,
present and discuss different algorithmic techniques,
and derive lower bounds.

Indeed, we find several most fundamental problems
where significantly faster and hence more scalable
network algorithms
can be devised than in traditional models.
In particular, we show that preprocessing allows us to run many centralized algorithms efficiently in a distributed setting: \eg compared to existing work, we can achieve an almost exponential speedup for problems such as maximal independent set, as well as for approximation schemes of its optimization variant.
Furthermore, we can show that for locally checkable labellings, all symmetry breaking problems collapse to constant time complexity.

However, surprisingly, we also identify inherent limitations of the usefulness of precomputations; even seemingly simple problems still require linear runtimes.
Moreover, we also prove that our provided approximation schemes are essentially the best one can hope for: faster algorithms come at the price of worse approximation ratios.

Last in this list, we study the impact of physical link failures on the power of preprocessing. Unlike prior work, we assume that the link failures can be arbitrary, \ie only a small part of the resulting physical topology could remain, possibly in many disconnected parts.
While we formally prove that this model inherently limits the general power of precomputation, we can still speed up various distributed algorithms for restricted graph classes respectively bounded node degrees.

\subsection{Organization}

The remainder of this paper is organized
as follows.
In Section~\ref{sec:model}, we begin our investigation by providing a formal model that captures the capabilities of preprocessing, along with defining some key notation.
Section~\ref{sec:examples} then provides two introductory examples to outline the power respectively limitations of preprocessing.
We next show in Section~\ref{sec:SLOCAL} how a recently introduced problem class of central computation can be turned distributed with small overhead, by leveraging preprocessing.
Section~\ref{sec:LCL} investigates the power of
preprocessing in an important class of locally verifiable problems, and Section~\ref{sec:opt} studies approximation of maximum independent sets.
We further prove lower bounds in Section~\ref{sec:lower-bound}, providing matching bounds for the algorithmic results of the previous section.
Related work is discussed in Section~\ref{sec:relwork}, followed by concluding remarks in Section~\ref{sec:conclusion}.

\section{Model}\label{sec:model}

Let us first revisit the
$\local$ model which is used predominantly
today to design and analyze distributed
network algorithms~\cite{peleg00distributed}.
In the $\local$ model the communication structure is given by an undirected graph
$G=(V,E)$,
 where each node
$v \in V$
is a computer with a unique identifier.
The nodes can exchange messages between neighbors across edges in each communication round, where in parallel each node can send and receive a message to/from each neighbor, and afterward update its local state.
Before the first communication round, each node is provided with an input that defines the problem instance.
The running time of an algorithm is the number of rounds needed until all nodes stopped, i.e., they announce a final output and no longer communicate.

In this paper, we are interested in the benefits of extending the $\local$ model with an opportunity of preprocessing the network topology.
We will refer to our model as the
$\supported$ model~\cite{hotsdn13}.
In the $\supported$ model the communication structure is given by an undirected connected graph $H=(V(H),E(H))$ (the \emph{support} graph), where again each node\footnote{In the context of SDNs (Section~\ref{subsec:SDNs}), each node in the $\supported$ model could correspond to a controller domain.}
 is a computer with a unique identifier.
However, the problem instance (the \emph{logical} state) is defined on a subgraph $G \subseteq H$, where $G$ is called the \emph{input} graph $G=(V,E)$ and inherits the node identifiers of $H$.
Computation in the $\supported$ model now proceeds in two steps:
$(1)$ As a preprocessing phase, the nodes may compute any function depending on $H$ and store their output in local memory.
Next, $(2)$ the nodes are tasked to solve a problem on the input graph $G$ in the $\local$ model, where additionally the preprocessing from $(1)$ and the communication infrastructure of $H$ may be used.
The running time $t$ of an algorithm in the $\supported$ model is measured in the number of rounds needed in step~$(2)$, also denoted as $\supported(t)$.
We will use analogous notation also for other models, e.g., $\local(t)$.

\section{Potential and Limitations}\label{sec:examples}

To provide intuition, we first present an example
that shows the potential of the $\supported$ model
to overcome inherent limitations of the $\local$ model.

\paragraph{Fast symmetry breaking.}
One of the most important problems~\cite{peleg00distributed} in the $\local$ model is \emph{coloring}, \ie providing each node with a color different to its neighbors', with the goal of using a small color palette.\footnote{For a fast (inefficient) coloring solution, each node could assign its unique identifier, \eg its MAC-address, as its color, requiring $n$ colors in total.} Such symmetry breaking is often the first step for more advanced problems, but also has direct applications, such as, \eg in scheduling. Even in very simple topologies such as a ring, computing a 3-coloring in the $\local$ model takes non-constant time~\cite{linial92locality}, and 2-coloring even requires $\Omega(n)$ rounds.
By using the power of precomputation, we can assign an efficient coloring based on the support graph ahead of time, which is then still valid on the input graph.
As such, the $\supported$ model can significantly speed up the computation of many $\local$ algorithms if the support graph allows for a coloring with few colors.
We note that this is just an introductory example, we cover more complex cases starting in the following Section~\ref{sec:SLOCAL}.

\paragraph{Not everything can be precomputed.}
However, there are also inherent limitations.
For example in \emph{leader election}, the task is to provide each connected component with a single coordinator.
In the $\local$ model, a leader can be elected in time equivalent to the diameter of such a component, assuming identifiers are present~\cite{DBLP:conf/stoc/Angluin80}.
From a theoretical point of view, the $\supported$ model cannot improve this time bound: as the input graph can be any subgraph of the support graph, nodes must first discover their connected component. If no leader exists, a new one must be elected, and if multiple leaders exist, they must coordinate with each other to declare a unique leader, possibly over long distances.
For an extreme example, consider a line as a support topology, where the input graph can consist of many disconnected components of large sizes.


\section{Exploiting the Locality of Global Problems}\label{sec:SLOCAL}
\paragraph{The $\slocal$ model.}
Many networking problems are slow to solve from scratch in a distributed setting.
For example, the best known algorithm for maximal independent set requires
$2^{O(\sqrt{ \log n})}$
rounds in the \local{} model~\cite{DBLP:journals/jal/PanconesiS96}, but the problem is rather trivial in a centralized setting:
go through the nodes sequentially, adding them to the independent set if none of their neighbors are in it.

The maximal independent set problem has a low locality (just the node's directed neighbors are important), but still defies fast distributed algorithms.
Recent work formalized this notion of locality from a globalized point of view, coined the \slocal{} model~\cite{DBLP:conf/stoc/GhaffariKM17}.
In the \slocal{} model, the locality $t$ is defined as the radius $t$-neighborhood a node may use to compute its output---where the nodes are processed sequentially in an arbitrary order.
For example, the maximal independent set problem is therefore in \slocal{} with locality $t=1$.

\paragraph{An $\slocal$ simulation.}

Ghaffari et al.\ \cite{DBLP:conf/stoc/GhaffariKM17} provided an approach to transforming any algorithm from the \slocal{} model into a purely distributed algorithm by simulating it in the \local{} model. This is done by computing a locally simulatable execution order for the \slocal{} algorithm: each node $v$ must know on what other nodes' output does $v$ need to wait on before computing its output?

We can adapt their techniques to also apply to the \supported{} model:

\begin{theorem}\label{thm:slocal-sim}
  An \slocal{} algorithm with locality $t$ can be simulated in the
	\supported{} model in time $O(t \cdot \poly \log n)$.
\end{theorem}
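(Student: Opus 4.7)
The plan is to adapt the network-decomposition-based SLOCAL-to-LOCAL simulation of Ghaffari et al., pushing all of the decomposition work into the preprocessing phase on $H$ so that only the parallelized execution counts toward the round complexity.

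\textbf{Preprocessing.} First, during the centralized preprocessing, I would compute a network decomposition of the power graph $H^{2t+1}$: a partition of $V(H)$ into clusters of weak diameter $d = O(\log n)$ in $H^{2t+1}$, equipped with a proper cluster coloring using $c = O(\log n)$ colors, so that any two clusters of the same color lie at distance strictly greater than $2t$ in $H$. Existence for every graph is guaranteed by Linial--Saks, and since preprocessing has no round cost and each node fully knows $H$, we may simply invoke the existence result (or any centralized algorithm). Each node stores its cluster identifier, cluster color, and a pointer along a BFS tree rooted at a designated cluster leader.

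\textbf{Online simulation.} Next, fix a global sequential order $\sigma$ on $V(G)$ that first lists all nodes of color $1$, then color $2$, and so on. For $i = 1, \ldots, c$, a super-round is executed in parallel across all clusters of color $i$: the leader of each such cluster $C$ gathers, over $H$-edges, the inputs of every node in $\bigcup_{v \in C \cap V(G)} B_G(v,t)$ together with any outputs already fixed in earlier super-rounds. This takes $O(d \cdot t)$ rounds because the gathered set lies within $H$-distance $O(d \cdot t)$ of the leader. The leader then simulates the \slocal{} algorithm locally on the nodes of $C \cap V(G)$ in the order dictated by $\sigma$, using the gathered information to answer each radius-$t$ query, and broadcasts each output back in another $O(d \cdot t)$ rounds.

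\textbf{Correctness and bound.} Same-color clusters are at $H$-distance greater than $2t$, so their $t$-neighborhoods in $G \subseteq H$ are pairwise disjoint; the parallel super-rounds of a single color therefore never conflict, and by the time $v$'s cluster is processed every earlier node within $v$'s radius-$t$ neighborhood in $G$ has already had its output assigned. Summing over the $c$ super-rounds gives $c \cdot O(d \cdot t) = O(t \cdot \poly \log n)$ rounds, as claimed.

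\textbf{Main obstacle.} The delicate design choice is what object to decompose during preprocessing. Decomposing $H$ itself would be insufficient: two same-color clusters that are non-adjacent in $H$ could still have points at $G$-distance $\leq 2t$, causing parallel simulations to overlap and corrupt the sequential semantics. Decomposing $H^{2t+1}$ instead transfers the separation from $H$ to every possible input $G \subseteq H$, at the cost of inflating cluster diameter by a factor $2t+1$ in $H$; this is precisely the $t$ factor that survives into the final bound.
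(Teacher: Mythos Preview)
Your proposal is correct and follows essentially the same strategy as the paper: both push the network-decomposition machinery of Ghaffari et al.\ into the preprocessing phase on $H$, then run the color-by-color simulation over $H$-edges so that the argument carries over to any input $G \subseteq H$.

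The one noteworthy difference is in what gets precomputed. The paper invokes Ghaffari et al.'s result as a black box to obtain a single $t$-independent ordering of depth $O(\poly\log n)$ on $H$, and then argues separately that this ordering remains valid for $G \subseteq H$ because dependency chains only shrink under edge deletion, while the $H$-edges are still available for communication. You instead make the mechanism explicit by decomposing the power graph $H^{2t+1}$, which bakes the locality parameter $t$ into the precomputed object but gives you the $2t$-separation between same-color clusters directly. Your ``main obstacle'' paragraph correctly identifies why decomposing $H$ alone would not suffice, and the fix via the power graph is exactly the standard one. The trade-off is that the paper's precomputation is reusable across all $t$, whereas yours is tied to the specific \slocal{} algorithm being simulated; for the theorem as stated (fixed $t$) this makes no difference.
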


\begin{proof}	Ghaffari et al.\ \cite{DBLP:conf/stoc/GhaffariKM17} show how network decompositions can be leveraged to compute dependencies of bounded depth:
	as some graphs, \eg the complete graph, require dependency chains of length $\Omega(n)$, it can be useful to also incorporate communication between the nodes.
	The idea is that the nodes learn all their nearby dependency chains. For example, in the complete graph, one round suffices to gather all information.
	With this modification, every graph has an ordering of depth $O(\poly \log n)$, if the nodes are allowed to communicate within a distance of $O(\poly \log n)$~\cite{DBLP:conf/stoc/GhaffariKM17}.

	We start with the case of $G=H$.
	After providing such an ordering, every \slocal{} algorithm with locality $t$ can be run in time $O(t \cdot \poly \log n)$ in the \local{} model, by simulating executions of lower priority in polylogarithmic distance.
	We can provide the ordering in the \supported{} model in the precomputation phase, finishing the case of $G=H$.

	We next cover the remaining case of $G\neq H$.
	Observe that a dependency chain is still valid even if edges are deleted, as nodes just need to respect the computation of all neighbors of lower priority --- but under edge deletions, a node might need longer to learn about previously nearby dependency chains.
	However, in the \supported{} model, the edges are not physically deleted\footnote{We study this model variant in \S\ref{sec:PASSIVE}, coined  \emph{passive} \supported{} model.}, but rather just not part of the problem graph anymore, still available for communication purposes.
	The simulation approach therefore still holds for $G\neq H$.
\end{proof}
Hence, we can obtain a maximum independent set in time $O(\poly \log n)$ in the \supported{} model, as it had an \slocal{} locality of $1$. Ghaffari et al.\ \cite{DBLP:conf/stoc/GhaffariKM17} define the class \pslocal{} as \slocal{}($\poly \log n$).
We obtain the following corollary:

\begin{corollary}
 $\pslocal{} \subseteq \supportedp{\poly \log n}$.
\end{corollary}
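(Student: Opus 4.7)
The plan is to obtain this as an essentially immediate consequence of Theorem~\ref{thm:slocal-sim} combined with the definition of \pslocal{}. By definition, $\pslocal = \slocal(\poly \log n)$, so any problem $\Pi \in \pslocal$ admits an \slocal{} algorithm $\A$ whose locality is bounded by some function $t(n) = \poly \log n$.

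First I would invoke Theorem~\ref{thm:slocal-sim} on $\A$: this gives a \supported{} algorithm that simulates $\A$ in $O(t(n) \cdot \poly \log n)$ rounds. Substituting $t(n) = \poly \log n$, the resulting running time is $O(\poly \log n \cdot \poly \log n)$. Since the product of two polylogarithmic functions is again polylogarithmic (if $t(n) \le c_1 (\log n)^{k_1}$ and the simulation overhead is $c_2 (\log n)^{k_2}$, the product is at most $c_1 c_2 (\log n)^{k_1 + k_2}$, which is still $\poly \log n$), the simulation runs in $\poly \log n$ rounds in the \supported{} model. Hence $\Pi \in \supportedp{\poly \log n}$, and since $\Pi$ was arbitrary, the inclusion $\pslocal \subseteq \supportedp{\poly \log n}$ follows.

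There is no real obstacle here; the statement is essentially a bookkeeping consequence of the theorem. The only thing one needs to be careful about is verifying that the class $\poly \log n$ is closed under multiplication, which is immediate. I would keep the proof to two or three sentences, simply applying the theorem with $t = \poly \log n$ and noting that the resulting bound stays within $\poly \log n$.
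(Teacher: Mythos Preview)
Your proposal is correct and matches the paper's approach exactly: the paper also treats the corollary as an immediate consequence of Theorem~\ref{thm:slocal-sim} together with the definition $\pslocal = \slocal(\poly \log n)$, without even writing out an explicit proof. Your observation that $\poly\log n$ is closed under multiplication is the only thing needed, and you have it.
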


\paragraph{Separation of $\slocal$ and $\supported$ model.}
As we have seen, the \slocal{} model can be simulated in the
 \supported{} model with polylogarithmic overhead.
Under some additional assumptions, the converse it not true:
\begin{itemize}
	\item \emph{Degree/colorability restrictions of the support graph $H$}: \newline If we restrict the support graph $H$ to be 2-colorable (or to be of maximum degree 2), then we can precompute a 2-coloring which is also valid for any subgraph $G \subseteq H$.
	Computing a 2-coloring on $G$ in the $\slocal$ model however has a locality of $\Omega(n)$.
	\item \emph{Unknown range of IDs / network size}: Computing an upper bound on the network size requires a locality of $\Omega(n)$ in the \slocal{} model, but is trivial with preprocessing.
	\item \emph{Inputs already known in the preprocessing phase}:\newline Similarly, if some problem inputs are already known ahead of time in the $\supported$ model, large speedups can be obtained in comparison to the $\slocal$ model.
\end{itemize}
However, we leave it as an open question if there is a strict (\eg beyond polylogarithmic) separation between the $\slocal$ and $\supported$ model without the above assumptions.


\section{Locally Checkable Labellings} \label{sec:LCL}

\emph{Locally checkable labellings}, first introduced by Naor and Stockmeyer~\cite{naor95what}, are a family of problems that can be efficiently verified in a distributed setting. They include fundamental problems like maximal independent set, maximal matching, $(\Delta+1)$-coloring, and $(2\Delta-1)$-edge coloring.

More formally, let $\Sigma$ and $\Gamma$ denote a finite sets of input and output labels, respectively. A graph problem consists of a set of labelled graphs with some constant maximum degree $\Delta$. A problem $\Pi$ is locally checkable if there exists a constant $r$ and a distributed algorithm $A$ with running time $r$ such that given a labelling $\lambda \colon V(G) \rightarrow (\Sigma \times \Gamma)$, we have that $A$ outputs \emph{yes} at every node if and only if $(G,\lambda) \in \Pi$. In particular, if $(G,\lambda) \notin \Pi$, at least one node outputs \emph{no}.

\begin{theorem} \label{thm:lcl-collapse}
  Every \lcl{} in \localp{o(\log n)} can be solved in time $O(1)$ in the \supported{} model.
\end{theorem}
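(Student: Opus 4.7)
My plan is to combine the Chang--Kopelowitz--Pettie (CKP) gap theorem for \lcl{}s with a Naor--Stockmeyer-style decomposition of sub-logarithmic \local{} algorithms. By the CKP gap theorem, every \lcl{} whose deterministic \local{} complexity is $o(\log n)$ is in fact in $\localp{O(\log^* n)}$. I would then use the canonical structure of such an $O(\log^* n)$-round algorithm: it splits cleanly into (i) a symmetry-breaking phase of $O(\log^* n)$ rounds that depends only on the graph topology and the unique IDs, producing a distance-$r$ coloring with $O(1)$ colors (for some constants $r$ and $c$ tied only to the \lcl{}), and (ii) a decision phase of $O(1)$ rounds that reads this coloring together with the input labels $\lambda$ to produce the final output at each node.

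The key observation is that phase~(i) does not depend on $\lambda$ at all; it is purely a function of the graph and the identifiers. Hence it can be shifted into the preprocessing step~$(1)$ of the \supported{} model, executed once on $H$. Concretely, in preprocessing I would compute a distance-$r$ coloring of $H$. In the natural \lcl{} setting $\Delta(H)$ is bounded, so $\Delta(H)^{O(r)} = O(1)$ colors suffice and such a coloring can be computed exhaustively since preprocessing is unrestricted. Crucially, because $G \subseteq H$ is obtained by deleting edges, any two nodes at distance at most $r$ in $G$ are still at distance at most $r$ in $H$, so the precomputed coloring is automatically a valid distance-$r$ coloring of $G$. Each node then enters step~$(2)$ already equipped with its color, and only the constant-round decision phase needs to run on $G$ (using $H$'s edges for communication), yielding total runtime $O(1)$ as claimed.

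The main obstacle is conceptual rather than technical: the reduction hinges on the CKP gap theorem, which is a deep structural result about \lcl{}s; I would invoke it as a black box rather than attempt to reprove it. A secondary technicality is reconciling the number of precomputed colors with the exact palette size $c$ that the Naor--Stockmeyer-style decision subroutine expects (a constant depending on the \lcl{} and $\Delta$, but not on $\Delta(H)$ or $n$). This is handled cleanly by prepending a single Linial-style $O(1)$-round color-reduction step to phase~(ii), which collapses the precomputed palette down to the required constant $c$ without affecting the overall $O(1)$ bound.
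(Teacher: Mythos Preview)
Your proposal is correct and rests on the same Chang--Kopelowitz--Pettie idea as the paper, but you take a more roundabout path. You first invoke the CKP gap theorem as a black box to land in $\localp{O(\log^* n)}$, then appeal to a canonical decomposition of such an algorithm into a label-free coloring phase and a constant-round decision phase, and finally worry about shrinking the precomputed palette down to whatever constant $c$ the decision phase expects. The paper instead applies the CKP construction directly, in one shot: given the original $o(\log n)$-round algorithm $A$, it picks the constant $n_0$ with $\Delta^{2T(n_0)+2}+1 \le n_0$, precomputes on $H$ a distance-$(2T(n_0)+1)$ coloring with exactly $n_0$ colors, and then simply runs $A$ on $G$ for $T(n_0)=O(1)$ rounds using these colors \emph{as the identifiers}. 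This sidesteps both the gap-theorem black box and your palette-reduction technicality, since the color set $[n_0]$ is chosen precisely so that $A$ accepts it as a valid identifier space. Note also that your attribution of the decomposition to Naor--Stockmeyer is slightly off: the split into ``compute a distance coloring, then run the original algorithm with colors as IDs'' is the CKP construction itself, so in effect you are invoking CKP twice where once suffices.
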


The proof follows the idea of Chang et al.\ \cite{DBLP:conf/focs/ChangKP16}: nodes compute a coloring that locally looks like a unique identifier assignment, but from a smaller space. Then we simulate a \local{} algorithm on this identifier setting and it must terminate fast.

\begin{proof}
  Let $A$ be an algorithm for an $\lcl{}$ $P$ with running time $T(n) = o(\log_{\Delta(G)} n)$ in the \local{} model (here $T(n)$ might include functions of $\Delta(G)$, \ie of the maximum degree of the input graph, but these are constants). Without loss of generality we assume that $P$ can be checked with radius 1 (this is a standard transformation---nodes encode also the outputs of their neighborhood as necessary).

  Given an input graph $G$ with maximum degree $\Delta = \Delta(G)$, the ball of radius $t$ around each node holds at most $1 + \Delta(\Delta-1)^{t-1}$ nodes. We want to find the smallest $n_0$ such that any nodes of $G$ at distance at most $2T(n_0)+2$ can be colored with different colors from $[n_0]$. By definition of $T(n)$ we can find a constant $n_0$ such that $\Delta^{2T(n_0)+2}+1 \leq n_0$, that is, we can greedily color nodes with locally unique colors.

  Given a support $H$ of size $N$, compute an $n_0$-coloring $\phi$ of distance $(2T(n_0)+1)$ of $H$. This can be done without communication given the support. Now we apply $A$ with the coloring $\phi$ on the input graph $G$. This can be done in time $T(n_0) = O(1)$. As $G$ is a subgraph of $H$, the coloring $\phi$ is still a proper distance $2T(n_0)+2$-coloring of $G$. Since each node sees only values from $[n_0]$ inside its $T(n_0)$-neighborhood and each value is unique, the algorithm is well defined on $G$ with $\phi$ representing the identifiers of the nodes. Finally, since locally in each 1-neighborhood the output of the nodes looks identical to some execution in a graph of size $n_0$, the output must be consistent with the legal outputs of $P$. Since $P$ is an \lcl{}, an output that is correct everywhere locally is correct also globally.
\end{proof}

In the standard \local{} model locally checkable labellings can have three types of complexities: trivial problems solvable in constant time, symmetry breaking problems with complexity $\Theta(\log^* n)$, and problems with complexity $\Omega(\log n)$. Theorem~\ref{thm:lcl-collapse} implies that all symmetry breaking problems collapse to constant time in the \supported{} model.

We are unable to establish the existence of ``intermediate'' \lcl{} problems for the \supported{} model \cite{brandt2016LLL,DBLP:conf/focs/ChangKP16,ghaffari2017splitting}, in particular a lower bound of $\Omega(\log n)$ for an \lcl{} in $\localp{\poly \log n}$. We do show that there are problems that are hard despite the power of the support.

Consider the problem of sinkless orientation: each edge must be oriented so that every node has an outgoing edge. Usually this problem is defined so that nodes with degree 2 or less can be sinks. We consider the variant where every node with degree at least 2 must not be a sink.

\begin{theorem} \label{thm:so-lower-bound}
  Finding a sinkless orientation requires $\Omega(n)$ time in the \supported{} model.
\end{theorem}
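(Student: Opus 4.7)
The plan is to use an indistinguishability argument built on a theta-graph support. Let $H$ consist of two hub vertices $s,t$ joined by three internally disjoint paths $P_1, P_2, P_3$, each of length $L = \Theta(n)$, so $|V(H)| = \Theta(n)$. As adversarial inputs I would use the three cycles $G_{12} = P_1 \cup P_2$, $G_{23} = P_2 \cup P_3$, and $G_{13} = P_1 \cup P_3$; each $G_{ij}$ is a simple cycle through $s$ and $t$, and every vertex of $G_{ij}$ has degree $2$.

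Let $w_i$ denote the midpoint of $P_i$. The geometric key fact is that every path in $H$ out of $w_i$ must traverse roughly $L/2$ edges of $P_i$ before reaching $s$ or $t$, and only from there can it enter the other two paths. So if the algorithm runs for $T < L/2$ rounds, the $T$-ball $B_T^H(w_i)$ is a sub-arc of $P_i$ disjoint from $\{s,t\}$. Because the support $H$ and the precomputed state (which by definition is a function of $H$ alone) are identical across all three inputs, and because in the two inputs that contain $P_i$ the $G$-edges lying inside $B_T^H(w_i)$ are exactly the $P_i$-edges of that arc, the entire $T$-view of $w_i$ is literally identical in both of them. Consequently the algorithm emits the same output at $w_i$ in both inputs that contain $P_i$. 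In every cycle, summing out-degrees forces each degree-$2$ vertex to have out-degree exactly $1$ in any sinkless orientation, so $w_i$'s output is captured by a single bit $d_i \in \{s,t\}$ recording which endpoint of $P_i$ the unique out-edge of $w_i$ points toward.

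From here I would extract the constraints. Each cycle $G_{ij}$ has exactly two sinkless orientations, its clockwise and counterclockwise traversals, and walking around either of them shows that $d_i$ and $d_j$ must point toward opposite hubs. Hence each of the three inputs contributes one disequality, together yielding $d_1 \neq d_2$, $d_2 \neq d_3$, and $d_1 \neq d_3$. This is an odd cycle of disequalities over a two-element set, which is unsatisfiable. Therefore at least one input $G_{ij}$ must receive an orientation where $w_i$ or $w_j$ violates the degree-$\geq 2$ sinkless constraint, contradicting correctness, and so $T \geq L/2 = \Omega(n)$.

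The subtle step to justify carefully is the indistinguishability claim itself, since in the $\supported$ model nodes can communicate through all of $H$ rather than only through $G$. The bound $T < L/2$ nevertheless suffices, because the shortest $H$-path from $w_i$ to any vertex outside the midpoint arc of $P_i$ must go through $s$ or $t$ and then continue onto another path; thus no information about which of the three input graphs is being processed can reach $w_i$ within $T$ rounds, beyond what is already encoded in the precomputed state, which itself depends only on $H$.
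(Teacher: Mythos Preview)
Your argument is correct and follows essentially the same approach as the paper: build a support in which several long paths share hubs, feed in different cycle inputs that share paths, use indistinguishability of the path midpoints (even with communication over $H$) to fix each midpoint's out-direction across inputs, and then derive a parity contradiction from the consistent-orientation constraint on cycles. Your theta-graph with three paths and the odd-cycle-of-disequalities argument is in fact a cleaner realization of the same idea than the paper's six-path construction with a WLOG case analysis.
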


\begin{proof}
  Consider the following support $H$ on $N = 6n$ nodes. Let $P_i = (v_{i,1}, v_{i,2}, \dots, v_{i,n})$, for $i \in \{1,2,\dots, 6\}$, denote six paths of length $n$. Add edges $\{v_{1,1}, v_{2,1}\}, \{ v_{1,1},v_{3,1} \}, \{ v_{1,n}, v_{4,1} \}, \{v_{1,n}, v_{5,1} \}$ and edges $\{v_{6,1}, v_{2,n}\}, \{ v_{6,1},v_{3,n} \}, \{ v_{6,n}, v_{4,n} \}, \{v_{6,n}, v_{5,n} \}$ between these paths.

  Observe that on a cycle a sinkless orientation is a \emph{consistent} orientation: each node must have exactly one incoming and one outgoing edge.

  Now consider the cycles $C_1 = (v_{1,1}, P_2, v_{6,1}, P_3)$ and $C_2 = (v_{1,n}, P_4, v_{6,n}, P_4)$ and assume that these are maximal connected components of $G$. Now both cycles must be oriented consistently in any valid solution. Without loss of generality assume that an algorithm $A$ orients path $P_2$ from $v_{1,1}$ to $v_{6,1}$, path $P_3$ from $v_{6,1}$ to $v_{1,1}$, path $P_4$ from $v_{1,n}$ to $v_{6,n}$, and path $P_5$ from $v_{6,n}$ to $v_{1,n}$.

  Now consider a graph $G'$ where the cycle $(P_2, P_1, P_4, P_6)$ forms a maximal connected component. If paths $P_2$ and $P_4$ are oriented as in $G$, then the solution cannot form a consistent orientation. Therefore either node $v_{2,\lceil n/2 \rceil}$ or node $v_{4,\lceil n/2 \rceil}$ must change its output between $G$ and $G'$. Since distinguishing between $G$ and $G'$ requires $\lceil n/2 \rceil$ rounds for these nodes, we conclude that solving sinkless orientation in the \supported{} model requires $\Omega(n)$ rounds.
\end{proof}


\section{Maximum Independent Set}\label{sec:opt}

We show how to find large independent sets in logarithmic time in the \supported{} model. This leads to an approximation scheme in graphs that have independent sets of linear size.

Let $\alpha(G)$ denote the fraction of nodes in the maximum independent set of graph $G$.

\begin{theorem} \label{thm:passive-lco}
  There is an algorithm that finds an independent set of size $(\alpha(G) - \epsilon)n$, for any $\epsilon > 0$, in time $O(\log_{1+\varepsilon} n)$ in the \supported{} model.
\end{theorem}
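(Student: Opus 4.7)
The plan is to exploit unlimited preprocessing to construct a decomposition of the support graph $H$ into low-diameter ``clusters'' $V_1, \dots, V_k$ plus a small separator $B$ with three properties: (i) no $H$-edge leaves one cluster for another, (ii) $|B| \leq \varepsilon n$, and (iii) every $V_i$ has $H$-diameter $O(\log_{1+\varepsilon} n)$. In the online phase, nodes in $B$ output ``not in IS''; within each cluster $V_i$, the nodes use $H$-links to gather $G[V_i]$ at a designated leader, compute an \emph{exact} maximum independent set of $G[V_i]$ locally (unbounded local computation is free in \supported{}), and output the result. Since clusters are pairwise $H$-non-adjacent and $G \subseteq H$, the union of the per-cluster optima is automatically a legal IS of $G$.

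I would build the decomposition greedily in preprocessing. Let $H_1 := H$, and at iteration $i$ pick any $v_i \in V(H_i)$ (say with smallest id). Let $B_i(r)$ denote the radius-$r$ ball around $v_i$ in $H_i$, and $L_i(r) := B_i(r) \setminus B_i(r-1)$ its outer shell. Take $r_i$ to be the smallest integer with $|L_i(r_i)| \leq \varepsilon \cdot |B_i(r_i-1)|$; declare $V_i := B_i(r_i - 1)$ a cluster, push $L_i(r_i)$ into $B$, and set $H_{i+1} := H_i \setminus B_i(r_i)$. Any $H$-edge out of $V_i$ must land in $B_i(r_i) = V_i \cup L_i(r_i) \subseteq V_i \cup B$, which gives the separation property. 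The standard ball-growing bound yields $r_i = O(\log_{1+\varepsilon} n)$: otherwise every layer would multiply the ball size by more than $1+\varepsilon$, forcing $|B_i(r)| > (1+\varepsilon)^r$ to exceed $n$. Summing the cluster-local charge $|L_i(r_i)| \leq \varepsilon |V_i|$ over $i$ yields $|B| \leq \varepsilon n$.

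For the approximation guarantee, any maximum IS $S^*$ of $G$ splits as $(S^* \cap B) \sqcup \bigsqcup_i (S^* \cap V_i)$, and each $S^* \cap V_i$ is an IS of $G[V_i]$, so $\alpha(G)\, n \leq |B| + \sum_i |\text{max IS of } G[V_i]|$. The algorithm's output equals the union of the per-cluster optima, hence has size at least $\alpha(G)\, n - |B| \geq (\alpha(G) - \varepsilon)\, n$. The runtime is dominated by the intra-cluster gather-and-broadcast at the leader, taking $O(\operatorname{diam}_H V_i) = O(\log_{1+\varepsilon} n)$ rounds over the $H$-links that \supported{} makes freely available for communication.

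The main subtlety I anticipate is checking that greedy removal does not spoil the ball-growing bound in the residual graph $H_i$; this is immediate because the argument only needs $|V(H_i)| \leq n$, so the $O(\log_{1+\varepsilon} n)$ radius bound applies uniformly across all iterations. A secondary point worth emphasizing is that the bottleneck communication happens along $H$-edges, not $G$-edges: this is precisely why low $H$-diameter of clusters---a property computable purely from the support graph during preprocessing---governs the online runtime, and why the \supported{} model outperforms any pure \local{} algorithm running on $G$ alone.
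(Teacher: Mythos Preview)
Your proof is correct and follows essentially the same ball-growing decomposition as the paper: precompute low-diameter clusters with thin boundaries on the support $H$, solve maximum independent set optimally inside each cluster using $H$-links for communication, and charge the loss to the $\varepsilon n$ boundary nodes. The only cosmetic difference is that you discard the outer shell into a global separator $B$ up front (so the clusters $V_i$ are pairwise $H$-non-adjacent and no conflict resolution is needed), whereas the paper keeps the shell inside each cluster and afterwards deletes the at most $\varepsilon' n$ boundary nodes that conflict with a neighboring cluster's solution; both variants give the same $(\alpha(G)-\varepsilon)n$ bound and the same $O(\log_{1+\varepsilon} n)$ running time.
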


\begin{proof}
  To prove this we use a standard ball growing argument~\cite{linial94low,DBLP:conf/stoc/GhaffariKM17} which essentially states that a graph can only expand for a logarithmic number of hops. We denote by $B_t(v)$ the $t$-hop neighborhood of node $v$. Let $H$ be a graph: for any $\varepsilon > 0$ and each node $v \in V(H)$, there exists a radius $r \leq \log_{1+\varepsilon} n$ such that $|B_{r+1}(v)| < (1+\varepsilon)|B_r(v)|$. To see this, assume the contrary: since $|B_{i+1}(v)| \geq (1+\varepsilon)|B_i(v)|$, for $i$ up to $r > \log_{1+\varepsilon} n$, we have that $|B_{r}(v)| \geq (1+\varepsilon)^r > (1+\varepsilon)^{\log_{1+\varepsilon} n} = n$, a contradiction.

  In the preprocessing phase we decompose the graph into parts that have logarithmic diameter and a small boundary. Then the problem can be solved optimally in each such component and fixed on the boundaries.

  Consider an arbitrary node $v_1 \in V(H)$ and find the smallest radius $r_1$ such that $|B_{r+1}(v_1)| < (1+\varepsilon)|B_r(v_1)|$ as observed previously. Define $C_1 = B_{r+1}(v_1)$ and remove $C_1$ from $H$. By definition, at most a fraction of $\varepsilon' = \varepsilon / (1+\varepsilon)$ nodes in $B_{r+1}(v_1)$ are connected to $G \setminus B_{r+1}(v_1)$.
  Continue by selecting another node $v_2$ and finding the smallest $r_2$ such that the boundary of $B_{r+1}(v_2)$ has an $\varepsilon'$-fraction of the nodes. Again put these nodes into cluster $C_2$ and remove them from $H \setminus C_1$. Proceed in this fashion until all nodes (neighborhoods) have been allocated to a cluster.

  Since at each step each cluster has an $\varepsilon'$-fraction of nodes on the boundary (in $B_{r+1}(v) \setminus B_r(v)$), we have at least a $(1-\varepsilon')$-fraction of nodes strictly inside the clusters (that is, not on the boundary). Denote by $\bar{C_i}$ the inner nodes of cluster~$C_i$.

  Now given the input graph $G$, each node $v$ in cluster $C_i$ gathers the subgraph of $G$ induced by $C_i$. Then all nodes use the same algorithm to find an optimal maximum independent set of $C_i$, and put themselves in the (independent) set $I$ if they are included in this solution. This can be done using local computation, since the subgraph is known to all nodes.

  Finally, there are at most $\varepsilon' n$ nodes that are connected to two clusters. In the worst case all of these are adjacent to another node in the chosen set $I$. By removing each such node that has a neighbor in $I$ with a smaller identifier we make $I$ independent. Let $I^*$ denote a maximum independent set of $G$ and $I[S]$ a set $I$ restricted to a subgraph $S$. Since $I^*[C_i]$ is an independent set of $C_i$, we have that $|I[C_i]| \geq |I^*[C_i]|$ for all $C_i$. Therefore $I$ has size at least $|I^*| - \varepsilon'n$.
\end{proof}

Now if the family of graphs from which the support is drawn has a linear lower bound on the size of the maximum independent set, the solution $I$ constitutes a $(1+\varepsilon)$-approximation of~$I^*$.

\begin{corollary} \label{cor:passive-lco}
  Maximum independent set can be approximated to within factor $(1+\epsilon)$, for any $\epsilon > 0$, on graphs of constant maximum degree in time $O(\log_{1+\varepsilon} n)$ in the \supported{} model.
\end{corollary}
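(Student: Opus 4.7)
The plan is to derive the corollary directly from Theorem~\ref{thm:passive-lco} by exploiting the fact that bounded degree forces the optimum to be a constant fraction of $n$. Let $\Delta$ be the fixed maximum degree and let $I^\ast$ denote a maximum independent set of the input graph $G$. A standard greedy argument (equivalently, Turán's theorem) shows that every graph of maximum degree $\Delta$ contains an independent set of size at least $n/(\Delta+1)$, so $\alpha(G) \geq 1/(\Delta+1)$, a constant bounded away from $0$.

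Given a target approximation factor $1+\epsilon$, the idea is to run the algorithm of Theorem~\ref{thm:passive-lco} not with $\epsilon$ itself but with a rescaled parameter $\epsilon'$ that is small enough to absorb the additive loss. Concretely, I would pick $\epsilon'$ proportional to $\epsilon/(\Delta+1)$, for instance $\epsilon' = \epsilon/((\Delta+1)(1+\epsilon))$. The theorem then produces in $O(\log_{1+\epsilon'} n) = O(\log_{1+\epsilon} n)$ rounds an independent set $I$ with
\[
    |I| \;\geq\; (\alpha(G) - \epsilon')\,n \;=\; \alpha(G)\,n \cdot \left(1 - \frac{\epsilon'}{\alpha(G)}\right) \;\geq\; |I^\ast| \cdot \left(1 - (\Delta+1)\epsilon'\right).
\]
Plugging in the chosen $\epsilon'$ gives $|I| \geq |I^\ast|/(1+\epsilon)$, i.e.\ a $(1+\epsilon)$-approximation.

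The only thing to double-check is that the runtime asymptotics are preserved under the rescaling: since $\Delta$ is a constant (independent of $n$), replacing $\epsilon$ by $\epsilon' = \Theta(\epsilon)$ changes $\log_{1+\epsilon'} n$ only by a constant factor, so the $O(\log_{1+\epsilon} n)$ bound from Theorem~\ref{thm:passive-lco} is inherited. There is no real obstacle here; the mild subtlety is simply making sure that the additive error $\epsilon' n$ delivered by the theorem is translated into a multiplicative error against the correct benchmark $|I^\ast| \geq n/(\Delta+1)$, which is exactly where the constant-degree assumption is used.
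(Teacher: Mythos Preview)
Your proposal is correct and follows essentially the same approach as the paper: the paper simply remarks, just before the corollary, that whenever the support family guarantees a linear lower bound on the maximum independent set (as bounded degree does via the greedy/Tur\'an bound), the additive loss of Theorem~\ref{thm:passive-lco} becomes a multiplicative $(1+\varepsilon)$-approximation. You are in fact more explicit than the paper about the rescaling $\epsilon \mapsto \epsilon' = \Theta(\epsilon/\Delta)$ needed to make the arithmetic go through and about why this preserves the $O(\log_{1+\varepsilon} n)$ runtime.
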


Note that the proof generalizes to a larger family of optimization problems: essentially we required that the global score of a solution is a sum of local scores, and that the nodes of the boundary constitute only a small fraction of the weight of the full solution. As an example, maximum cut and maximum matching are not problems of this type, as the input graph could only contain edges on the boundaries of the clusters, making the precomputation unhelpful.

The above results have the best possible running time as a function of $n$: finding an independent set of size $\omega(\frac{n \log \Delta}{\Delta})$ requires logarithmic time, as will be shown in Section~\ref{sec:lower-bound}. Ghaffari et al.\ \cite{DBLP:conf/stoc/GhaffariKM17} used the same ball growing idea to give \slocal{} algorithms for computing $(1+\varepsilon)$-approximations of covering and packing ILPs in time $O(\log n / \varepsilon)$. Their algorithm can be simulated via Theorem~\ref{thm:slocal-sim} in time $O(\poly \log n)$.

Note that we are abusing the unlimited computational power of the \supported{} model: solving each cluster optimally is an \textsf{NP}-hard problem.

\section{Lower Bounds}\label{sec:lower-bound}

In the previous section, we saw that maximum independent sets can be approximated well in the $\supported$ model in \emph{logarithmic} time. To complement this result, we now give a lower bound that shows that any \emph{sublogarithmic}-time algorithm in the $\supported$ model necessarily results in a poor approximation ratio:

\begin{theorem}\label{thm:mislb}
  Maximum independent set cannot be approximated by a factor $o(\Delta/\log \Delta)$ in time $o(\log_{\Delta} n)$ in the $\supported$ model.
\end{theorem}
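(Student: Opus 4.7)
The plan is to construct a support $H$ and an input $G \subseteq H$ on which every $o(\log_\Delta n)$-time \supported{} algorithm outputs an independent set of size $O(n \log \Delta / \Delta)$, while $\alpha(G) = \Omega(n)$, yielding the claimed ratio. The argument combines two classical ingredients: (i) the Ajtai--Koml\'os--Szemer\'edi bound that every $\Delta$-regular graph of sufficiently large girth has independence number $O(n \log \Delta / \Delta)$, and (ii) explicit or random constructions of $\Delta$-regular graphs of girth $\omega(\log_\Delta n)$ on $n$ vertices.

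I would take $H$ to be a $2\Delta$-regular graph of girth $\omega(\log_\Delta n)$ whose edges partition into two spanning $\Delta$-regular subgraphs: a bipartite $G_{\mathrm{bip}}$ with $\alpha(G_{\mathrm{bip}}) \geq n/2$, and a non-bipartite high-girth $G_{\mathrm{hg}}$ with $\alpha(G_{\mathrm{hg}}) = O(n \log \Delta / \Delta)$. Because $H$ has high girth and $t = o(\log_\Delta n)$, every radius-$t$ ball in $H$ is a tree, so locally each $G_i$ restricts to a $\Delta$-regular subtree of this common tree.

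The heart of the argument is a symmetrization/indistinguishability step: by choosing $H$ with a suitably rich automorphism group (e.g., a Cayley construction) that swaps the roles of the $G_{\mathrm{bip}}$-edges and $G_{\mathrm{hg}}$-edges within each local tree, and by letting the adversary randomize the input uniformly over this orbit before presenting it, the local view of every vertex is distributed identically on input $G_{\mathrm{bip}}$ and on input $G_{\mathrm{hg}}$. Hence a deterministic algorithm's probability of including any given vertex in its output is the same for both inputs; by linearity of expectation, the expected output sizes agree. Since on $G_{\mathrm{hg}}$ the output is a valid independent set, this expected size is at most $\alpha(G_{\mathrm{hg}}) = O(n \log \Delta / \Delta)$, and comparing to $\alpha(G_{\mathrm{bip}}) = \Omega(n)$ yields a ratio of $\Omega(\Delta / \log \Delta)$ on $G_{\mathrm{bip}}$; Yao's minimax principle then extracts a deterministic pair $(H,G)$ on which the algorithm fails by the same factor.

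The main obstacle is realizing the symmetrization inside the \supported{} model, where the preprocessing sees $H$ in full, including its ID labelling. One cannot hope for deterministic indistinguishability of $G_{\mathrm{bip}}$ and $G_{\mathrm{hg}}$ after preprocessing, since the algorithm can immediately read off which $\Delta$ of its $2\Delta$ $H$-neighbors lie in $G$. The distributional framing bypasses this by pushing the adversarial uncertainty entirely into the input layer; the technical core is thus constructing $H$ together with a symmetry that (a) preserves $H$, (b) maps $G_{\mathrm{bip}}$-edges to $G_{\mathrm{hg}}$-edges within each $t$-ball, and (c) still produces two globally distinct subgraphs with the asserted independence-number gap.
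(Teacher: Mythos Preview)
Your high-level plan matches the paper's: a $2\Delta$-regular high-girth support containing two $\Delta$-regular spanning subgraphs---one bipartite with $\alpha\ge n/2$, one with $\alpha=O(n\log\Delta/\Delta)$---together with a distributional local-indistinguishability argument and Yao's principle. Two points need fixing.

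First, the Ajtai--Koml\'os--Szemer\'edi theorem is a \emph{lower} bound: every triangle-free (hence every high-girth) $\Delta$-regular graph has $\alpha=\Omega(n\log\Delta/\Delta)$. It cannot supply your $G_{\mathrm{hg}}$. The matching \emph{upper} bound is not universal; it is attained by random $\Delta$-regular graphs (Bollob\'as, Frieze--{\L}uczak), and one must simultaneously arrange high girth. The paper invokes exactly these probabilistic existence results to obtain the base graph.

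Second, the symmetrization you correctly flag as the ``technical core'' is not delivered by the Cayley-group sketch. A global automorphism of $H$ swapping the two edge classes would force $G_{\mathrm{bip}}\cong G_{\mathrm{hg}}$, contradicting the independence-number gap you need; so whatever symmetry you use cannot be a single $H$-automorphism. The paper resolves this with an explicit double-cover construction: take a $k$-regular graph $Q$ (where $d=2k$) with girth $>2T+1$ and $\alpha(Q)<n_Q/(2\alpha_d)$, and let $H$ on $V_Q\times\{0,1\}$ contain both the ``straight'' and the ``across'' lift of every edge of $Q$. Then one subgraph $G_1\cong 2Q$ inherits the small independence number, while the other $G_2\cong Q\times K_2$ is bipartite with $\alpha\ge n_Q$. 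The adversary's randomization is over \emph{cuts} $C_i$ of $Q$ (each cut determines which lift of each edge is present), and indistinguishability is proved one vertex at a time via a distance-parity bijection $C_1\mapsto C_2$ that equalizes the distribution of the radius-$T$ view around that vertex. Crucially this bijection depends on the vertex, so no global automorphism is invoked and the global gap between $G_1$ and $G_2$ survives. Your outline is pointed in the right direction, but this concrete mechanism is precisely the missing piece.
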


This bound is tight in the sense that it can be matched in triangle-free regular graphs: Shearer~\cite{shearer83triangle} noted that the randomized greedy algorithm finds an independent set of size $O(\frac{\ln d}{d}n)$, and this can be approximated with a small loss in randomized constant time in the \local{} model, for example using the method of random priorities due to Nguyen and Onak~\cite{nguyen08constant}.

More precisely, we analyze the \emph{expected} approximation ratio here. To prove the result, assume that there is a family of (randomized) algorithms $\A_d$ such that for each even $d$ algorithm $\A_d$ finds a factor $\alpha_d$ approximation of a maximum independent set in any graph for which we have a $d$-regular support. Furthermore, assume that $\A_d$ runs in time $T(n,d) = o(\log_d n)$ in the $\supported$ model. To reach a contradiction, assume that $\alpha_d = o(d/\log d)$.

Fix a $d = 2k$. Using the probabilistic method, we can find a sufficiently large $n_Q$ such that there exists a $k$-regular graph $Q = (V_Q,E_Q)$ with $n_Q$ nodes with the following properties \cite{frieze92independence,wormald99regular,alon10constant}:
\begin{enumerate}[noitemsep]
    \item the girth of $Q$ is larger than $2T(2n_Q, d) + 1$,
    \item there is no independent set with at least $n_Q/(2\alpha_d)$ nodes.
\end{enumerate}

We will consider graphs with $n = 2n_Q$ nodes: for each original node $v \in V_Q$, we will have two copies $v_0$ and $v_1$. Let
\[
    V = \bigl\{ v_x : v \in V_Q, x \in \{0,1\} \bigr\}
\]
be the set of all copies. For each $F \subseteq E_Q$ we define
\[
\begin{split}
    E(F) &= \bigl\{ \{u_x,v_{1-x}\} : \{u,v\} \in F, x \in \{0,1\} \bigr\} \cup {} \\
           &\phantom{{}={}} \bigl\{ \{u_x,v_x\} : \{u,v\} \in E_Q \setminus F, x \in \{0,1\} \bigr\}.
\end{split}
\]
For each $F$, the graph $G(F) = (V, E(F))$ is a double cover of $Q$; intuitively, $F$ tells which pair of edges goes ``straight'' and which goes ``across''. In particular, we are interested in the following graphs (see Figure~\ref{fig:lbsupp}):
\[
    H = (V, E(\emptyset) \cup E(E_Q)), \quad
    G_1 = G(\emptyset), \quad
    G_2 = G(E).
\]
Here graph $G_1 \cong 2Q$ consists of two copies of $Q$ and graph $G_2 \cong Q \times K_2$ is the bipartite double cover of $Q$. By construction, there is no maximum independent set in $G_1$ with at least $n/(2\alpha_d) = n_Q/\alpha_d$ nodes. However, the largest independent set of $G_2$ has at least $n/2 = n_Q$ nodes, as $\{v_0: v \in V_Q\}$ is an independent set.

\begin{figure}[t]
    \centering
    \includegraphics[page=1, width=0.7\textwidth]{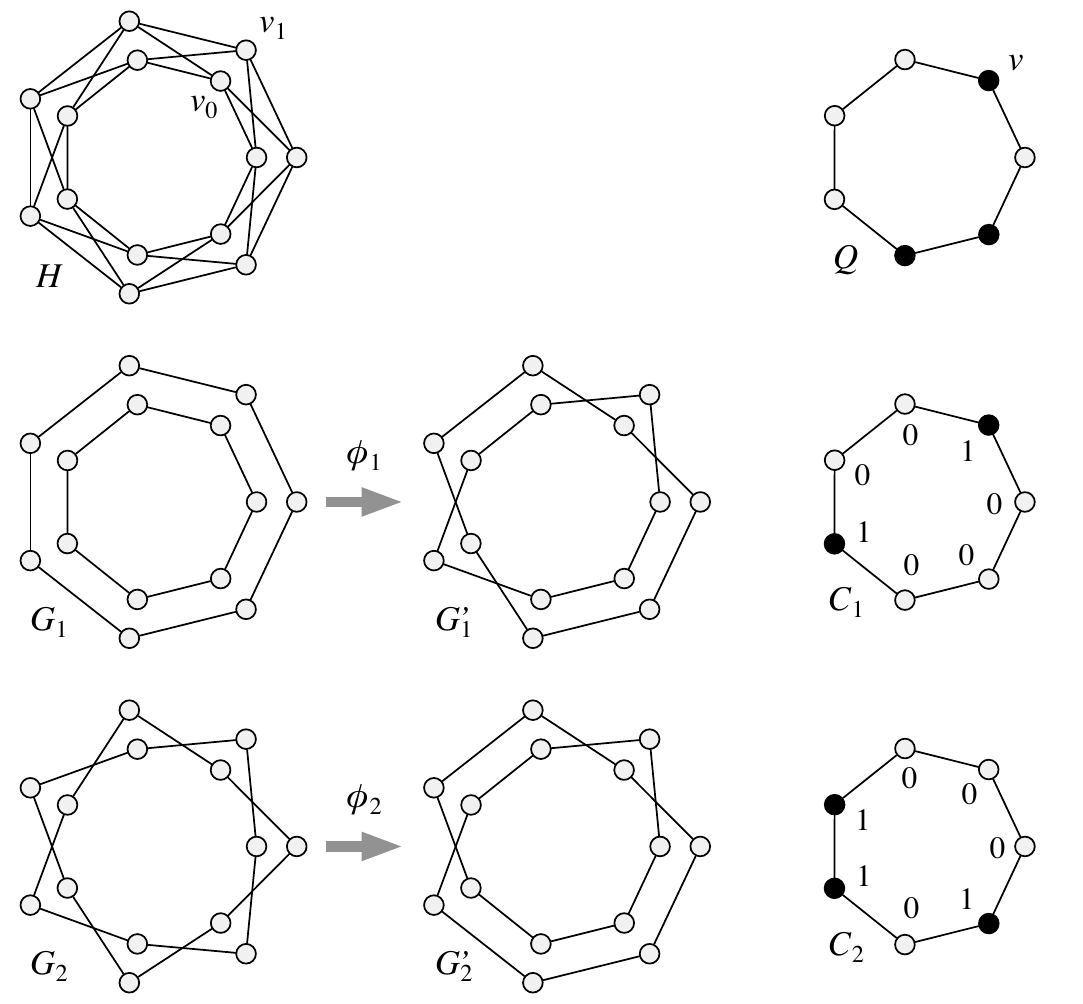}
    \caption{Construction of graphs $H$, $G_1$, $G_2$, and random graphs $G_1'$ and $G_2'$. In this example, $k = 2$ and graph $Q$ is a $7$-cycle. No matter how we choose $C_1$, graph $G_1'$ consists of two disjoint copies of $Q$, while $G_2'$ is a bipartite double cover of $Q$ (which happens to be a $14$-cycle). While random graph $G_1'$ and $G_2'$ have a different structure from a global perspective, they are locally indistinguishable in the sense that each possible local neighborhood occurs in both graphs with the same probability.}\label{fig:lbsupp}
\end{figure}

Now we will construct random graphs $G_1'$ and $G_2'$ as follows (see Figure~\ref{fig:lbsupp}). For each $i \in \{1,2\}$ and $v \in V_Q$, choose a label $C_i(v) \in \{0,1\}$ uniformly at random. Then let $X_i \subseteq E_Q$ be the set of edge $\{u,v\} \in E_Q$ with $C_i(u) \ne C_i(v)$. That is, $C_i$ is a uniform random cut of $Q$, and $X_i$ is the set of cut edges w.r.t.\ $C_i$. Define
\begin{align*}
    G_1' &= G(F_1), \text{ where } F_1 = X_1 \text{ and } \\
    G_2' &= G(F_2), \text{ where } F_2 = E_Q \setminus X_2.
\end{align*}

First, we will argue that from a global perspective, graphs $G_1'$ and $G_2'$ are different.

\begin{lemma}
    Graph $G_1'$ is isomorphic to $G_1$ and graph $G_2'$ is isomorphic to $G_2$.
\end{lemma}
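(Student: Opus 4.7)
The plan is to exhibit an explicit isomorphism in each case, built from the random cut that defines the graph. For $i \in \{1,2\}$, define $\phi_i : V \to V$ by $\phi_i(v_x) = v_{x \oplus C_i(v)}$. Since $C_i(v)$ does not depend on $x$, this is an involution on $V$, in particular a bijection. I would then show that $\phi_1$ sends edges of $G_1$ to edges of $G_1'$ and $\phi_2$ sends edges of $G_2$ to edges of $G_2'$. Because both $G_1, G_1'$ have exactly $2|E_Q|$ edges (and likewise $G_2, G_2'$), an edge-preserving bijection is automatically a graph isomorphism, so this edge-check is all that is required.

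For $\phi_1$: a typical edge of $G_1 = G(\emptyset)$ is the straight edge $\{u_x, v_x\}$, which maps under $\phi_1$ to $\{u_{x \oplus C_1(u)}, v_{x \oplus C_1(v)}\}$. The two copy-coordinates agree iff $C_1(u) = C_1(v)$, i.e.\ iff $\{u,v\} \notin X_1 = F_1$, which is exactly the condition under which $G_1'$ places a straight edge on $\{u,v\}$. The complementary case $C_1(u) \neq C_1(v)$ produces a crossing edge on a pair in $F_1$, again matching the definition of $G_1'$. For $\phi_2$: a typical edge of $G_2 = G(E_Q)$ is the crossing edge $\{u_x, v_{1-x}\}$, which maps to $\{u_{x \oplus C_2(u)}, v_{1 \oplus x \oplus C_2(v)}\}$; a one-line XOR check shows the image is straight iff $C_2(u) \neq C_2(v)$, iff $\{u,v\} \in X_2$, iff $\{u,v\} \notin F_2$, matching $G_2'$'s rule, and the other case is symmetric.

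The whole construction is tailored so that the bit-flip on the $v$-coordinate controlled by $C_i(v)$ trades the two edge ``types'' of the family $G(F)$ against one another; I do not foresee any real obstacle beyond the XOR bookkeeping. The only point that needs care is that $F_1 = X_1$ and $F_2 = E_Q \setminus X_2$ sit on opposite sides of the cut/non-cut complement, which is exactly the reason the same recipe $\phi_i$ converts $G_i$ into $G_i'$ in both cases.
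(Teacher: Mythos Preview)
Your proposal is correct and essentially identical to the paper's proof: the paper defines the same map $\phi_i(v_x)=v_x$ if $C_i(v)=0$ and $\phi_i(v_x)=v_{1-x}$ if $C_i(v)=1$, which is exactly your $v_{x\oplus C_i(v)}$, and then observes (with slightly less explicit XOR bookkeeping) that $\phi_1$ carries $E(\emptyset)$ to $E(F_1)$ and $\phi_2$ carries $E(E_Q)$ to $E(F_2)$. Your write-up simply spells out the edge-by-edge check more carefully.
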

\begin{proof}
For each cut $C_i$ of graph $Q$, define the bijection $\phi_i\colon V \to V$ as follows:
    \[
        \phi_i(v_x) = \begin{cases}
            v_x & \text{if } C_i(v) = 0, \\
            v_{1-x} & \text{if } C_i(v) = 1.
        \end{cases}
    \]
    That is, $\phi_i$ exchanges the labels of the copies of $v \in V$ whenever $C_i(v) = 1$. In particular, $\phi_i$ exchanges labels at exactly one endpoint of $e \in E_Q$ iff $e$ is a cut edge w.r.t.\ $C_i$. By construction, we have
    \begin{align*}
        E(F_1) &= \bigl\{ \{ \phi_1(u), \phi_1(v) \} : \{u,v\} \in E(\emptyset) \bigr\}, \\
        E(F_2) &= \bigl\{ \{ \phi_2(u), \phi_2(v) \} : \{u,v\} \in E(E_Q) \bigr\}.
    \end{align*}
    That is, $\phi_1$ is a graph isomorphism from $G_1$ to $G_1'$, and similarly $\phi_2$ is a graph isomorphism from $G_2$ to $G_2'$.
\end{proof}

\begin{corollary}\label{cor:G1mis}
    The maximum independent set of $G_1'$ has fewer than $n_Q/\alpha_d$ nodes and the maximum independent set of $G_2'$ has at least $n_Q$ nodes.
\end{corollary}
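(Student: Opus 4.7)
The plan is to bootstrap directly off the preceding lemma: once we know $G_1' \cong G_1$ and $G_2' \cong G_2$, the independence number is an isomorphism invariant, so it suffices to bound $\alpha(G_1)$ from above and $\alpha(G_2)$ from below. Both bounds are essentially hardwired into the construction of $G_1$, $G_2$, and $Q$, so no genuinely new argument is needed; the only thing to do is to unwind the definitions carefully.

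For the upper bound on $\alpha(G_1')$, I would observe that $G_1 = G(\emptyset)$ has edges $\{u_x, v_x\}$ for each $\{u,v\} \in E_Q$ and each $x \in \{0,1\}$, which by inspection decomposes into two disjoint copies of $Q$, one on $\{v_0 : v \in V_Q\}$ and one on $\{v_1 : v \in V_Q\}$. Hence $\alpha(G_1) = 2\alpha(Q)$. Property~2 of $Q$ says $\alpha(Q) < n_Q/(2\alpha_d)$, so $\alpha(G_1) < n_Q/\alpha_d$, and by the lemma the same bound holds for $G_1'$.

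For the lower bound on $\alpha(G_2')$, I would exhibit the explicit independent set $S = \{v_0 : v \in V_Q\}$ in $G_2 = G(E_Q)$. By definition of $E(E_Q)$, every edge of $G_2$ has the form $\{u_x, v_{1-x}\}$ for some $\{u,v\} \in E_Q$, so both endpoints of any edge lie on opposite sides of the bipartition by the second index. In particular no edge of $G_2$ has both endpoints in $S$, so $S$ is independent, and $|S| = n_Q$. Transporting along the isomorphism of the lemma gives an independent set of size $n_Q$ in $G_2'$, establishing the claim.

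Nothing here is really an obstacle; the only mild care is to make sure we use the correct edge rule from the definition of $E(F)$ (edges in $F$ cross between the copies, edges outside $F$ stay within a copy) so that the decomposition of $G_1$ into two disjoint $Q$'s and the bipartite structure of $G_2$ come out the right way.
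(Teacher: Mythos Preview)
Your proposal is correct and matches the paper's approach: the paper records, just before the lemma, that $G_1 \cong 2Q$ has no independent set of size $n_Q/\alpha_d$ and that $G_2$ contains the independent set $\{v_0 : v \in V_Q\}$ of size $n_Q$, and then states the corollary as an immediate consequence of the isomorphisms $G_i' \cong G_i$. You have simply spelled out the same reasoning in slightly more detail.
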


However, while the random graphs $G_1'$ and $G_2'$ are globally very different, we will argue that they are locally indistinguishable from the perspective of $\supported$ algorithms, if we use the same support $H$.

Let $T = T(2n_Q, d) = T(n,d)$ be the running time of algorithm $\A_d$ when we run it in the random graph $G_i'$ with support $H$. Consider a node $u \in V_Q$ and one of its copies, say, $u_0 \in V$.

As $Q$ is a graph of girth larger than $2T+1$, the radius-$T$ neighborhood of $u$ is a regular tree. Let $V^T_Q \subseteq V_Q$ be the set of nodes in the tree. Let $H^T$ be the radius-$T$ neighborhood of $u_0$ in the support $H$. Note that $H^T$ is the subgraph of $H$ induced by the copies of the nodes in $V^T_Q$. As the running time of $\A_d$ is $T$, the probability distribution of the output of node $u$ only depends on the input within $H^T$.

Construct a labelling $D\colon V^T_Q \to \{0,1\}$ that labels the nodes of the tree by their distance from $u$ modulo~$2$, and define a bijection $f\colon C_1 \mapsto C_2$ between labellings $C_1$ and $C_2$ as follows. Given any labelling $C_1\colon V_Q \to \{0,1\}$, define a labelling $C_2\colon V_Q \to \{0,1\}$ by setting $C_2(v) = C_1(v) + D(v) \bmod 2$ for each $v \in V^T_Q$ close to $u$, and let $C_2(v) = C_1(v)$ for each $v \notin V^T_Q$ far from $u$.

This is clearly a bijection. In particular, the following random processes are indistinguishable: (1)~pick $C_2$ uniformly at random; (2)~pick $C_1$ uniformly at random and let $C_2 = f(C_1)$.

Recall that we used $C_1$ to construct graph $G_1'$ and $C_2$ to construct graph $G_2'$. The key observation is this: if we set $C_2 = f(C_1)$, then the structure of $G_1'$ restricted to $H^T$ is isomorphic to the structure of $G_2'$ restricted to $H^T$. In particular, the probability that $u_0$ joins the independent set in $G_1'$ is the same as the probability that $u_0$ joins the independent set in~$G_2'$.

Summing over all choices of $C_1$ (and hence summing over all choices of $C_2 = f(C_1)$), we conclude that the probability that $u_0$ joins the independent set is the same in random graphs $G_1'$ and $G_2'$. Summing over all choices of $u_0$ (and similarly $u_1$), we see that the expected size of the independent set produced by $\A_d$ is the same in $G_1'$ and $G_2'$.

By Corollary~\ref{cor:G1mis}, we see that the expected size of the independent set produced by $\A_d$ in any $G_1'$ has to be less than $n_Q/\alpha_d$, and hence it fails to find an $\alpha_d$-approximation in expectation in some $G_2'$. This concludes the proof of Theorem~\ref{thm:mislb}.

We point out that similar ideas can be used to prove a lower bound for the maximum cut problem. For example, in $d$-regular triangle-free graphs, it is possible to find a factor $1/(1/2 + 1/O(\sqrt{d}))$ approximation of the maximum cut in \emph{constant} time with randomized $\local$ model algorithms \cite{DBLP:journals/combinatorics/HirvonenRSS17}. We can show that switching to the $\supported$ model does not help: it is not possible to find a factor $1/(1/2 + 1/o(\sqrt{d}))$ approximation of the maximum cut in \emph{sublogarithmic} time with randomized $\supported$ algorithms.


\section{Link failures: The Passive \supported{} Model}\label{sec:PASSIVE}
So far we assumed that all edges in the support topology $H$ can be used for communication, no matter the input subgraph $G \subseteq H$.
However, if edges physically fail, then this assumption is no longer viable.
We thus introduce the \emph{passive} \supported{} model, where after the precomputation phase, communication is restricted to the input graph $G$.

For simplicity of presentation, we only consider edge failures\footnote{A node failure can be simulated by failing all incident edges.}
 and assume the communication graph is $G=(V,E)$.
Note that unlike most prior work, we do not restrict ourselves to a few failures, but allow \emph{any} subgraph $G$ of $H$.
We next provide a brief overview of our results for the passive model.

\paragraph{General graphs are problematic.}
From a very general point of view, precomputation on general graphs does not help much in the passive \supported{} model.
For example, if $H$ is a complete support graph, then it seems that only few meaningful information can be prepared against adversarial failures, such as an upper bound on the network size respectively obtaining a superset of the ID-space; no meaningful topological information about the structure of $G$ is known ahead of time.
We will formalize this intuition in Section~\ref{subsec:passive-sim}.

\paragraph{Restricted graph families are useful.}
This unsatisfying situation changes when we consider graph properties that are retained under edge deletions.
To give a prominent example, a planar graph remains planar, no matter what subgraph $G$ is selected.
Similarly, the genus or chromatic number of a graph only becomes smaller.
We will show in Section~\ref{subsec:planar-genus} how to speed up the runtime of some algorithms in these restricted graph families, beginning with planar graphs.

\paragraph{Weaker $\slocal$ simulations.}
Due to the possible physical edge failures, our $\slocal$-simulation from Section \ref{sec:SLOCAL} can no longer be applied. In particular on dense graphs, it relied extensively on support edges not contained in the input graph, but those edges are no longer available for communication purposes.
Notwithstanding, we can still provide a weaker simulation of the $\slocal$ model in Section~\ref{subsec:passive-slocal}.

\subsection{Simulation in the \local{} Model}\label{subsec:passive-sim}

We begin by showing that, under certain assumptions, the passive \supported{} model can be simulated in the deterministic \local{} model.
When a passive \supported{} algorithm runs in polylogarithmic time, the overhead is only a constant factor in the number of rounds, illustrating that in general the power of preprocessing in the passive case is rather limited.

\begin{theorem} \label{thm:local-passive-simulation}
  Consider the \local{} model with identifiers in $\{ 1,2,\dots,n \}$. Let $\Pi$ be a graph problem such that a feasible solution for each connected component is a feasible solution for the whole graph. Let $A$ be $T(n)$-time algorithm for $\Pi$ in the \emph{passive} \supported{} model that works for a support with maximum degree $\Delta = O(n^{1/k})$. Then $A$ can be simulated in time $O(T(n^k))$ in the \local{} model.
\end{theorem}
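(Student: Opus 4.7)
The plan is to simulate the passive \supported{} algorithm $A$ in the \local{} model by manufacturing a virtual support $H$ from the \local{} input $G$. Given $G$ on $n$ nodes with identifiers in $\{1,\ldots,n\}$, I would construct $H$ by attaching a bounded-degree tree of $n^k - n$ padding nodes (with identifiers $\{n+1,\ldots,n^k\}$) to vertex $1$ of $G$. The resulting $H$ is connected, has $n^k$ nodes, and maximum degree at most $\Delta(G)+1 \le n = (n^k)^{1/k}$, so it satisfies the degree hypothesis $\Delta = O(n^{1/k})$ (reading the $n$ there as the support size $n^k$) under which $A$ is guaranteed to work.

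For the simulation, each \local{} node $v$ spends $T(n^k)$ rounds gathering its $T(n^k)$-hop neighborhood in $G$. Because the passive \supported{} model restricts phase-2 communication to $G$, every message $A$ would exchange in phase 2 is already present in this gathered view, so $v$ can locally emulate $A$'s phase-2 execution round by round. The hypothesis that a feasible solution on each connected component is a feasible global solution lets the \local{} algorithm treat each connected component of $G$ as an independent subproblem; inside a component $C$, the effective virtual support is $C$ together with the canonical padding, which is a deterministic function of $C$ alone. If $\mathrm{diam}(C) \le T(n^k)$, each node in $C$ collects all of $C$ within the time budget and can therefore reconstruct all preprocessing values $p(u, H)$ exactly.

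The main obstacle is the case where some component $C$ has diameter larger than $T(n^k)$, so that nodes cannot fully observe $C$ within the time bound, while $p(u,H)$ may in principle depend on faraway parts of $C$. My plan here is to have each node $v$ deterministically complete the unseen portion of its component (for instance, treating what it has seen as a standalone component of $H$ padded with dummies) and then compute preprocessing against this local completion. The delicate point, where I expect the proof to require the most care, is verifying that the canonical completions chosen by neighboring nodes yield preprocessing values that are mutually consistent over the small overlapping windows each one inspects during phase 2, so that the assembled output is genuinely the output of $A$ on some valid support graph and is therefore feasible by correctness of $A$.
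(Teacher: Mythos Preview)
Your construction makes the virtual support $H$ depend on the input graph $G$: you take $G$ itself and hang padding off vertex~$1$. That is exactly the source of the difficulty you flag at the end, and it is not merely ``delicate'' --- it is fatal. In the \supported{} model the preprocessing value $p(v,H)$ may depend arbitrarily on \emph{all} of $H$, not just on a neighborhood of $v$. So to simulate $A$, a node $v$ must know $H$ in full before it can even begin phase~2. In your scheme this means $v$ must know its entire connected component $C$ of $G$, which is impossible in $T(n^k)$ rounds when $\mathrm{diam}(C) > T(n^k)$. Your fallback of having each node complete its partial view to some local $H_v$ does not work: neighboring nodes $u,v$ will in general build non-isomorphic completions $H_u \ne H_v$, hence compute incompatible preprocessing values, and the resulting phase-2 simulation is not the execution of $A$ on any single support. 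There is no reason the outputs should satisfy~$\Pi$.

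The paper avoids this by choosing $H$ to be \emph{independent of $G$}: it puts a clique $K_n$ on the $n$ real identifiers and attaches $n^k - n$ virtual nodes of bounded degree. Because $K_n \subseteq H$, \emph{every} possible input $G$ on $\{1,\dots,n\}$ is already a subgraph of this fixed $H$, so the same $H$ works regardless of what $G$ turns out to be. Since $H$ is determined by $n$ alone (and $n$ is known from the identifier range), every node can compute $p(v,H)$ with zero communication; the only distributed work is the $T(n^k)$ rounds of phase~2 on $G$. The missing idea in your attempt is precisely this universality of the support.
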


\begin{proof}
 Let $A$ be a $T(n)$-time algorithm for a problem $\Pi$. The nodes of $G$ agree on a virtual support $H$ as follows: $H$ consists of a clique $K_n$ on the real vertices in $V(G)$, and of $n^k - n$ virtual nodes. Each real node is connected to one virtual node, and the remaining virtual nodes form a connected graph of maximum degree $\Delta(H) = n$. Since all real nodes $v$ also have $\deg_H(v) = n$, we have that $\Delta(H) = n = |V(H)|^{1/k}$.

  Next nodes start simulating $A$ as if it was run on $H$, with the edges failing in a way that produces the observed input graph $G$. In particular, the edges between the real and the virtual nodes are assumed to have failed. Since $H$ contains the complete graph $K_n$ as its subgraph, all observable graphs $G$ can be formed from the support $H$ by edge deletions.

  Since the support $H$ has size $n^{k}$, we have that $A$ runs on $H$ (and thus $G$) in time $T(n^{k})$. If $T(n) = O(\poly \log n)$, then $T(n^{k}) = O(\poly \log n)$. Since $A$ produces a feasible output on $H$, it produces a feasible output on $H$ restricted to $G$. By assumption this is a feasible output on $G$.
\end{proof}

Note that the above result covers in particular the class of locally checkable labelling problems. The assumption that the nodes have names from $\{1,2,\dots,n\}$ is not usually useful for algorithm design in the \local{} model.

We can also consider optimization problems. We consider problems for which the size of the solution is the sum of the sizes over all connected components.

\begin{corollary} \label{cor:approximation-simulation}
  Let $A$ be a $T(n)$-time algorithm for an optimization problem $\Pi$ in the passive \supported{} model that produces a solution of size at least $\alpha n$ when $\Delta(H) = O(n^{1/k})$. Then there exists a \local{} algorithm that produces a solution of size $\alpha n$ in time $T(n^k)$.
\end{corollary}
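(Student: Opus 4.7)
The plan is to directly repeat the simulation from the proof of Theorem~\ref{thm:local-passive-simulation} and track the approximation guarantee through it. Given a \local{} input graph $G$ on $n$ nodes with identifiers in $\{1,\ldots,n\}$, the real nodes jointly simulate the same virtual support $H$ of $n^k$ nodes as there: the clique $K_n$ on $V(G)$ together with $n^k - n$ virtual nodes forming a connected low-degree structure, with each real node attached to one private virtual node, so that $\Delta(H) = n = |V(H)|^{1/k}$. They then run $A$ as a passive \supported{} algorithm on $H$, declaring every edge of $H$ not in $G$ (in particular every edge incident to a virtual node) as failed. Since $|V(H)| = n^k$, the simulation terminates in $T(n^k)$ rounds of \local{} communication along the edges of $G$, with each real node locally simulating the role of its attached virtual node.

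The only new element relative to Theorem~\ref{thm:local-passive-simulation} is the accounting for the solution size. Because $\Pi$'s solutions are additive over connected components and the passive \supported{} model precludes communication across failed edges, algorithm $A$ handles each connected component of its observed input independently; the restriction of $A$'s output to the real components of $G$ is therefore determined solely by the real subgraph. Applying the $\alpha n$-size guarantee to the real part of the simulated input yields a feasible solution of size at least $\alpha n$ on the $n$ \local{} nodes, which is then returned by the \local{} algorithm.

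The main subtlety lies in exactly this per-component transfer of the approximation ratio: one must argue that inflating the support from $n$ nodes to $n^k$ via virtual padding does not let $A$ ``pay'' for its guarantee using virtual components instead of real ones. The passive \supported{} assumption of no cross-component communication makes this transparent, as the behavior of $A$ on the real components is indistinguishable from its behavior on an input consisting of those real components alone, and $A$'s worst-case guarantee must already hold on such inputs; the virtual nodes serve only to furnish the larger support that $A$ formally requires.
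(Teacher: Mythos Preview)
Your argument has a genuine gap in the accounting step. The guarantee of $A$ is that on a support of size $N$ it produces a solution of size at least $\alpha N$. In your simulation the virtual support has $N = n^{k}$ nodes, so all you obtain is a solution of size $\alpha n^{k}$ on the full virtual instance; nothing forces any of this weight onto the $n$ real nodes. Your attempt to close the gap---that ``the behavior of $A$ on the real components is indistinguishable from its behavior on an input consisting of those real components alone, and $A$'s worst-case guarantee must already hold on such inputs''---fails for two reasons. First, the preprocessing phase of a passive \supported{} algorithm sees the whole support $H$, so the output on a real component genuinely depends on the virtual part of $H$; you cannot replace $H$ by a support containing only the real component and expect identical behaviour. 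Second, even if you keep $H$ fixed and merely fail all virtual edges, the resulting input still has $n^{k}$ nodes, and the $\alpha n^{k}$ guarantee can be met entirely by the $n^{k}-n$ isolated virtual nodes (think of maximum independent set).

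The paper's proof addresses exactly this issue with a different construction and an averaging argument. Instead of one clique plus padding, the virtual support consists of many disjoint cliques $H_{0},H_{1},\dots$, each of size $n$; the real graph lives in $H_{0}$. The virtual cliques are then populated greedily: $H_{1}$ is given the subgraph and identifier assignment (from the remaining pool) on which $A$ performs worst, $H_{2}$ the worst over what is left, and so on. Since at every step the real graph $G_{0}$ with the leftover identifiers was a candidate, we get $f(G_{0}) \ge f(G_{i})$ for all $i$, and summing the per-clique contributions against the global bound $\sum_{i} f(G_{i}) \ge \alpha N$ forces $f(G_{0}) \ge \alpha n$. This adversarial-averaging step is the missing idea; without it the size guarantee simply does not localise to the real component.
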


\begin{proof}
  Without loss of generality we consider a problem where we want to \emph{maximize} the target function $f$. Let $N = n^k$. Let $G_0$ be the input graph on $n$ nodes in the \local{} model, and let $H$ denote the (virtual) support that consists of $N$ cliques of size $n$, denoted by $H_0, H_1, H_2, \dots, H_{N-1}$, connected by some subset of edges, at most one per node.

  We construct the input graph $G$ as follows, shuffling the identifiers of the graph as we go. First, remove all edges between two different cliques $H_i$ and $H_j$. Then, let $G_1$ equal the subgraph of $H_1$ and $\varphi_1$ the identifier setting from $[N]$ to $V(G_1)$ such that the value $f(G_1)$ is minimized over all subgraphs and all identifier settings, when $A$ is run on $G_1$. Let $\varphi(G_1)$ denote the set of identifiers used on $G_1$. Again, let $G_2$ be the subgraph of $H_2$ with the identifier setting $\varphi_2$ from $[N] \setminus \varphi(H_1)$ that minimizes $f(H_2)$ over all subgraphs and all identifier settings. We proceed in this manner until all $H_i$ have been dealt with, except for $H_0$.

  There is a subset $S = \{s_1, s_2, \dots, s_n \} \subseteq [N]$ of identifiers left. We choose the mapping $\varphi_0: i \mapsto s_i$ for the identifiers of $G_0$. This can be done consistently by all nodes of $G_0$ since the graphs $H_1, H_2, \dots, H_{N-1}$ can be constructed with the knowledge of $N$ and $A$. Now, since regardless of the topology of $G_0$, it is a subgraph of each $H_i$ and the set $S$ was under consideration when assigning each $\varphi_i$, we must have that $f(G_0) \geq f(G_{N-1}) \geq f(G_{N-2}) \geq \cdots \geq f(G_1)$. In addition, since we have by assumption that $\sum_{i=0}^{N-1} f(G_i) \geq \alpha N$, we must have that $f(G_0) \geq \alpha n$.
\end{proof}

\subsection{Breaking Locality Lower Bounds}\label{subsec:planar-genus}
\paragraph{Planar graphs.} Maximum independent set, maximum matching, and minimum dominating set are considered to be classic networking problems.
Already on planar graphs, a $(1 + \delta)$-approximation is impossible to compute in constant time for any of the three problems~\cite{DBLP:conf/wdag/CzygrinowHW08}.
However, we can adapt the algorithms of Czygrinow et al.~\cite{DBLP:conf/wdag/CzygrinowHW08} to break these locality lower bounds in the passive \supported{} model:

\begin{theorem}\label{thm:passive-supported-planar}
Let $H$ be a planar support graph and fix any $\delta >0$. In the passive \supported{} model, a $(1 + \delta)$-approximation can be computed in constant time on any input graph $G \subseteq H$ for the following three problems: $1)$ maximum independent set, $2)$ maximum matching, and $3)$ minimum dominating set.
\end{theorem}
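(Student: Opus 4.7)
The plan is to adapt the $O(\log^* n)$-round \local{} algorithms of Czygrinow, Hanckowiak, and Wawrzyniak for planar graphs by absorbing their symmetry-breaking and clustering steps into the preprocessing phase. Those algorithms produce a $(1+\delta)$-approximation by (i) computing a vertex partition $V(H) = V_1 \cup \cdots \cup V_K$ with $K = K(\delta) = O(1)$ such that every connected component of each induced subgraph $H[V_i]$ has bounded size $s(\delta) = O(1)$, and then (ii) iterating through the parts, solving each component optimally while respecting commitments on shared boundary vertices. The $\log^* n$ bottleneck sits entirely in (i) and depends only on graph structure, not on any runtime input.

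First, during preprocessing on the known planar $H$, I would compute the CHW partition together with all associated per-vertex data: the part index $c(v) \in \{1,\dots,K\}$, the list of $H$-vertices in the component of $H[V_{c(v)}]$ containing $v$, and the $H$-edges incident to that component. At runtime on $G$, the algorithm proceeds in $K$ phases. In phase $i$, every $G$-connected component $C$ inside $V_i$ first gathers (a) its $G$-induced subgraph and (b) the outputs already committed in phases $1,\dots,i-1$ at boundary vertices of $C$; because $C$ is contained in some $H$-component of size at most $s(\delta)$, this gathering takes $O(s(\delta)) = O(1)$ rounds in the passive \supported{} model, even when an edge deletion forces communication to detour within the component. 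Each vertex then locally computes, using the unbounded local power of the model, the optimal feasible solution on $C$ consistent with the earlier commitments, and outputs its own value. The total runtime is $O(K \cdot s(\delta)) = O(1)$.

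The main obstacle will be verifying that CHW's approximation analysis still goes through when the graph being decomposed ($H$) differs from the graph being solved ($G \subseteq H$). Edges can only be deleted, never added, so each $G$-component of a part is a subgraph of an $H$-component satisfying the same size bound, keeping the per-part local optimization well-defined and the part-interaction boundary no larger than in CHW. For maximum independent set and maximum matching this is essentially painless: feasibility is preserved by edge deletion, and CHW's per-cluster local-optimality bound compares local and global optima structurally and hence transfers to $\mathrm{opt}(G)$ rather than $\mathrm{opt}(H)$. Minimum dominating set is the delicate case, since a boundary edge present in $H$ but absent in $G$ can leave a previously dominated vertex uncovered; I would absorb this by letting any such vertex dominate itself and charging it to the same $\delta$-boundary budget that CHW already pays. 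A secondary subtlety is that parts connected in $H$ may fragment in $G$, but distinct $G$-components of the same part only interact with the rest of $G$ through vertices in their own boundary, so the CHW boundary-loss analysis decomposes cleanly across the $G$-components without additional loss.
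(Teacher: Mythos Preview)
Your plan rests on a mischaracterization of the Czygrinow--Ha\'{n}\'{c}kowiak--Wawrzyniak algorithm. CHW does \emph{not} output a fixed vertex partition $V = V_1 \cup \dots \cup V_K$ with $K = K(\delta)$ parts whose induced pieces have bounded size; it computes a \emph{single} clustering of the input graph into (possibly $\Omega(n)$ many) clusters of bounded diameter. Crucially, that clustering is \emph{input-dependent}: one first builds a ``weight-appropriate'' pseudo-forest on $G$ whose edge weights encode the optimization objective, 3-colors it, extracts heavy stars, contracts, and iterates a constant number of times. The heavy-star selection---and hence the resulting clusters---changes with $G$ and with the problem (MIS, matching, dominating set each use different weights). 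So the object you propose to precompute on $H$ simply is not a well-defined structural invariant of $H$; it does not exist independently of the runtime instance. Your own worry that ``the graph being decomposed ($H$) differs from the graph being solved ($G$)'' is a symptom of this: the CHW approximation guarantee ties the inter-cluster loss to $\opt(G)$ precisely because the clustering was driven by $G$'s weights.

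The paper's argument sidesteps this entirely. It observes that within CHW the \emph{only} step that is not already constant-time is the 3-coloring of the pseudo-forest (this is where the $\log^* n$ comes from). Everything else---building the weighted pseudo-forest, finding heavy stars, contracting, and solving each cluster optimally---runs on $G$ at runtime in $O(1)$ rounds. Hence it suffices to precompute a proper 4-coloring of $H$ (possible since $H$ is planar); this is automatically a proper 4-coloring of every subgraph, in particular of any pseudo-forest the runtime algorithm constructs inside $G$. One then invokes a known color-reduction primitive for pseudo-forests that takes a 4-coloring to a 3-coloring in two rounds. The rest of CHW executes on $G$ unchanged, so its approximation analysis applies verbatim and no ``$H$ versus $G$'' transfer argument is needed.
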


\begin{proof}
We study the algorithm construction Czygrinow et al.~\cite{DBLP:conf/wdag/CzygrinowHW08}, which solves the three problems in non-constant time, and show how it can be adapted to the passive \supported{} model.
It is based on finding a weight-appropriate pseudo-forest\footnote{A directed graph with maximum out-degree of one.} in constant time, which in turn is 3-colored. The coloring allows to find so-called heavy stars in constant time, which are used as the base of a constant time clustering algorithm.
Due to the heavy stars having a constant diameter, the approximation in~\cite{DBLP:conf/wdag/CzygrinowHW08} also runs in constant time.

Hence, the only non-constant time component is to find a 3-coloring of the pseudo-forest.
As planar graphs have a chromatic number of 4, we can precompute such a 4-coloring in the passive \supported{} model, i.e., it remains to go from 4 to 3 colors in constant time.
To this end, we can make use of~\cite[Algorithm \emph{DPGreedy}]{Suomel2016}, which in pseudo-forests reduces an $x$-coloring to an $x-1$-coloring in two rounds, assuming that $x\geq 4$, which concludes the proof construction.
\end{proof}

\paragraph{Extensions to bounded genus graphs.}
Amiri et al.~\cite{DBLP:conf/podc/AmiriSS16,disc16ba,DBLP:journals/corr/AmiriSS17} show how to adapt the technique of  Czygrinow et al.~\cite{DBLP:conf/wdag/CzygrinowHW08} to graphs of bounded genus\footnote{Graphs of genus $0$ are the class of planar graphs.} $g \in O(1)$ for the minimum dominating set problem.
By careful analysis of their work, as in the proof of Theorem~\ref{thm:passive-supported-planar}, it only remains to provide a 3-coloring of the resulting pseudo-forests in constant time.
As a genus $g$ graph can be colored with $O(\sqrt{g})$ colors~\cite{Ringel438}, we can obtain such a 3-coloring in $O(\sqrt{g})$ time as well.
\begin{corollary}
Let $H$ be a support graph of constant genus $g$ and fix any $\delta >0$. In the passive \supported{} model, a $(1 + \delta)$-approximation can be computed in constant time on any input graph $G \subseteq H$ for the minimum dominating set problem.
\end{corollary}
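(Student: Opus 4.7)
The plan is to follow the same blueprint as in Theorem~\ref{thm:passive-supported-planar}, replacing the planar-specific ingredients by their bounded-genus analogues. First, I would invoke the bounded-genus adaptation of Czygrinow et al.\ due to Amiri et al.\ \cite{DBLP:conf/podc/AmiriSS16,disc16ba,DBLP:journals/corr/AmiriSS17}: this yields an algorithm for minimum dominating set on graphs of genus $g \in O(1)$ whose only non-constant-time subroutine is the $3$-coloring of a pseudo-forest that is extracted from the input in constant time. Everything else (heavy-star extraction, clustering with constant-diameter clusters, and the final approximation step) already runs in constant time once such a $3$-coloring is available, and each of these steps is purely local, so it carries over unchanged to the passive \supported{} model on the input graph $G$.

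Next I would handle the $3$-coloring of the pseudo-forest via a two-phase scheme that exploits the support. In the preprocessing phase, the nodes use the fixed support $H$ to compute a proper vertex coloring of $H$ with $O(\sqrt{g})$ colors, which is possible since $H$ has genus at most $g$ by Ringel's bound \cite{Ringel438}; because $G \subseteq H$, this coloring remains proper on every possible input graph $G$, and in particular on the pseudo-forest that the Amiri et al.\ procedure extracts from $G$. In the online phase, once the pseudo-forest has been computed on $G$, I would iteratively apply \cite[Algorithm \emph{DPGreedy}]{Suomel2016}, which in a pseudo-forest reduces an $x$-coloring to an $(x-1)$-coloring in two rounds whenever $x \geq 4$. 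Starting from the $O(\sqrt{g})$-coloring inherited from $H$ and iterating $O(\sqrt{g}) - 3 = O(\sqrt{g})$ times, I obtain a valid $3$-coloring of the pseudo-forest in $O(\sqrt{g}) = O(1)$ rounds, since $g$ is a constant.

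Plugging this $3$-coloring into the Amiri et al.\ pipeline then completes the $(1+\delta)$-approximation for minimum dominating set in constant time on any $G \subseteq H$, finishing the proof. The main subtlety to watch out for is that the pseudo-forest is extracted from $G$ (not from $H$), so the coloring I precomputed on $H$ must still be proper when restricted to the pseudo-forest --- this follows immediately from the fact that the pseudo-forest is a subgraph of $G \subseteq H$, and hence from the monotonicity of proper colorings under edge and vertex deletions. The other point worth noting is that the passive model suffices here: the precomputed color label of each node is stored locally before any edges fail, and the online phase only needs communication along edges of $G$, which is exactly what the passive \supported{} model permits.
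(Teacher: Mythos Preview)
Your proposal is correct and follows essentially the same approach as the paper: invoke Amiri et al.'s bounded-genus adaptation of Czygrinow et al., observe that the only non-constant step is the $3$-coloring of the extracted pseudo-forest, precompute an $O(\sqrt{g})$-coloring of $H$ via Ringel's bound, and then reduce to a $3$-coloring in $O(\sqrt{g})=O(1)$ rounds using \emph{DPGreedy}. You even make explicit the monotonicity argument (the pseudo-forest is a subgraph of $G\subseteq H$) and the passive-model compatibility, both of which the paper leaves implicit.
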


\paragraph{Locally checkable labellings.}
We briefly re-visit graphs with some constant maximum degree $\Delta$, to investigate \lcl{}s in the passive $\supported$ model.
The bounded degree property allows us to efficiently precompute colorings, which we will also investigate in the next Section~\ref{subsec:passive-slocal}.
In fact, when we check the proof of our $\lcl$-Theorem~\ref{thm:lcl-collapse}, we just made use of the precomputed coloring, not of the (now failed) additional support edges.
As thus, symmetry breaking problems also collapse to constant time in the passive $\supported$ model:

\begin{corollary} \label{cor:lcl-collapse-passive}
  Every \lcl{} in \localp{o(\log n)} can be solved in time $O(1)$ in the passive \supported{} model.
\end{corollary}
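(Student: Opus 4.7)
The plan is to observe that the proof of Theorem~\ref{thm:lcl-collapse} already essentially establishes the passive version, and to spell out exactly why the failed edges of $H\setminus G$ are not needed anywhere in that argument. Concretely, I would revisit the three ingredients used there: (i) choice of the constant $n_0$ with $\Delta^{2T(n_0)+2}+1 \le n_0$, (ii) computation of an $n_0$-coloring $\phi$ of $H$ at distance $2T(n_0)+1$, and (iii) simulating the \local{} algorithm $A$ on $G$ using $\phi$ as identifiers. Step~(i) is purely a calculation depending on $\Delta$ and $T$. Step~(ii) happens entirely in the preprocessing phase, which in the passive \supported{} model is still a centralized computation on $H$; no physical communication along failed edges is invoked. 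Step~(iii) runs $A$ on the input graph $G$ for $T(n_0)=O(1)$ rounds, and all communication during these rounds takes place along edges of $G$, which are exactly the edges available in the passive model.

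I would then verify that the validity guarantees of $\phi$ survive unchanged: since $G \subseteq H$, any proper distance-$d$ coloring of $H$ is also a proper distance-$d$ coloring of $G$, so the $T(n_0)$-neighborhood of every node in $G$ still sees $n_0$ distinct color values. Hence $\phi$ is a legal identifier assignment from $[n_0]$ for $A$ on $G$, and each $1$-neighborhood looks indistinguishable from some execution on a graph of $n_0$ nodes. Because $P$ is an \lcl, local correctness implies global correctness, giving the desired $O(1)$-round passive \supported{} algorithm.

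The only conceptual obstacle is to confirm that nothing in the Chang--Kopelowitz--Pettie-style speedup argument we are borrowing secretly relies on having additional communication links from $H$ during the execution of $A$. Inspecting the proof shows that all use of $H$ is confined to preprocessing (producing $\phi$), and that the runtime analysis of $A$ treats messages along edges of the input graph only. Thus no step fails under the passive assumption, and the corollary follows verbatim from the same argument as Theorem~\ref{thm:lcl-collapse}.
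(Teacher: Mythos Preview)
Your proposal is correct and takes essentially the same approach as the paper: the paper simply remarks that the proof of Theorem~\ref{thm:lcl-collapse} uses only the precomputed coloring $\phi$ and never communicates over the failed support edges, and your write-up spells out this observation in more detail.
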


\subsection{An \slocal{} Simulation in the Passive \supported{} Model}\label{subsec:passive-slocal}

As described earlier in this section, we can no longer rely on our efficient $\slocal$ simulation from Section~\ref{sec:SLOCAL} in the passive $\supported$ model.
Still, we can create another kind of dependency chain which also works in the passive model.

\begin{theorem}
  An \slocal{} algorithm with locality $t$ can be simulated in the passive \supported{} model in time $\Delta^{O(t)}$.
\end{theorem}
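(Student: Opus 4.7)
The plan is to replace the network-decomposition argument of Theorem~\ref{thm:slocal-sim}, which crucially uses long-range communication through $H\setminus G$, by a purely local one driven by a precomputed coloring. In the preprocessing phase, exploiting full knowledge of $H$ together with the fact that each node of $H$ has at most $1+\Delta(\Delta-1)^{2t-1}\le \Delta^{O(t)}$ nodes within distance $2t$, compute greedily a proper distance-$2t$ coloring $\phi\colon V(H)\to[\chi]$ of $H$ with $\chi=\Delta^{O(t)}$ colors. This uses no communication and stores one color per node.

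In the execution phase on $G\subseteq H$, simulate the \slocal{} algorithm by processing color classes in order $c=1,2,\ldots,\chi$. During the time window for color $c$, every node $v$ with $\phi(v)=c$ spends $t$ rounds gathering its $t$-neighborhood in $G$, together with the inputs on it and the outputs that nodes of color $<c$ have already committed to (which are stored locally at those nodes). Then $v$ locally applies the \slocal{} output rule. The ordering induced by $\phi$, with ties inside a color class broken arbitrarily, serves as the sequential \slocal{} processing order.

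For correctness, the only point to verify is that the parallel computations inside a single color class cannot interfere. Since $G\subseteq H$, we have $d_G(u,v)\ge d_H(u,v)$, so $\phi$ restricted to $V(G)$ is still a proper distance-$2t$ coloring. Hence any two distinct nodes $u,v$ with $\phi(u)=\phi(v)$ satisfy $d_G(u,v)>2t$ and therefore their radius-$t$ views in $G$ are disjoint; in particular, the output that $u$ is about to commit to does not lie in the view used by $v$, and the parallel execution inside class $c$ coincides with any sequential one. The total running time is $\chi\cdot O(t)=\Delta^{O(t)}$ and all communication happens in $G$, as required by the passive model. The main subtlety I expect is the same one flagged in the proof of Theorem~\ref{thm:slocal-sim}, namely that a dependency chain remains valid under edge deletions so that gathering in $G$ alone suffices; the only price we pay for losing access to $H\setminus G$ is that the polylogarithmic color palette of the non-passive case must be replaced by the much larger distance-$2t$ palette of size $\Delta^{O(t)}$.
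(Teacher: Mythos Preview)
Your proposal is correct and follows essentially the same approach as the paper: precompute a distance-$2t$ coloring of $H$ with $\Delta^{O(t)}$ colors, then run the \slocal{} algorithm color class by color class, using the disjointness of the radius-$t$ balls within a class to guarantee a valid sequential simulation. Your write-up is in fact slightly more explicit than the paper's, in particular the observation that $d_G\ge d_H$ preserves the distance coloring on the input graph, and the justification that same-color nodes have disjoint $t$-views.
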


\begin{proof}The following proof is conceptually similar to the proof of Theorem~\ref{thm:slocal-sim}, but will rely on a dependency chain that is given by a coloring hierarchy, i.e., in the simulation, nodes of color 1 execute first, then color 2 etc.
When creating the coloring, we need to ensure that the resulting execution performs as a global $\slocal$ one.
To this end, for every $t$-neighborhood, it must hold that at most one node executes its actions at the same time:
then, as the locality $t$ is known, the executing nodes can gather the states of all nodes of smaller colors according to the given locality.
In other words, any two nodes within distance $2t$ of each other need to have distinct colors, which is satisfied by a distance-$(2+1)$ coloring.
As the maximum degree of the support graph being $\Delta$, such a coloring can be performed with $\Delta^{O(t)}$ colors.
The execution time for a single color class is $O(t)$, resulting in  $O(t)\Delta^{O(t)}$ rounds, which can be simplified to $\Delta^{O(t)}$ in the big-O notation for $\Delta >1$.
\end{proof}

\section{Related Work}\label{sec:relwork}

Decentralized and local network algorithms
have been studied for almost three decades
already~\cite{naor95what}, often with applications
in ad-hoc networks in mind.
More recently, decentralized approaches are also
discussed intensively in the context of software-defined
networks~\cite{Nguyen:2017:DCU:3050220.3050224,berde2014onos,yeganeh2013scalability,hassas2012kandoo,canini2015distributed}, which serve us as a case study
in this paper, and also motivated Schmid and Suomela~\cite{hotsdn13} to consider the $\supported$ model.

To the best of our knowledge, this paper
is the first to systematically explore the
novel opportunities and limitations introduced
by enhancing scalable network algorithms with preprocessing.
That said, there are several interesting
results which are relevant in this context as well.

The congested clique~\cite{DBLP:conf/spaa/LotkerPPP03} can be considered a special case of the $\supported$ model, where the support graph $H$ is the complete graph $K_n$.
However, only speedup without preprocessing is investigated, and communication is restricted to messages of logarithmic size, i.e., the so-called $\congest$~\cite{peleg00distributed} model.
Korhonen et al.\ \cite[\S7]{DBLP:conf/opodis/KorhonenR17} investigate the $\supported$ $\congest$ model (with support graphs of bounded degeneracy) and show how preprocessing can be leveraged for faster subgraph detection in sparse graphs.

\section{Conclusion}\label{sec:conclusion}

Scalability is a key challenge faced by the
quickly growing communication networks.
In this paper, we initiated the study of
how enhancing local and decentralized algorithms
with preprocessing (as it is often easily possible
in modern networks) can help to further improve
efficiency and scalability of such networks.
We presented several positive results on how
preprocessing can indeed be exploited, but also
pointed out limitations.

We understand our work as a first step and
believe that it opens many interesting questions
for future research. In particular, there
exist several fundamental algorithmic problems
for which the usefulness of preprocessing still
needs to be explored. Furthermore, it will also
be interesting to better understand the relationship
between the opportunities introduced by
supported models and the opportunities introduced by
randomization.

\section*{Acknowledgements}

This work was supported in part by the Academy of Finland, Grant 285721, and the Ulla Tuominen Foundation. Part of the work was conducted while JH was affiliated with the Institut de Recherche en Informatique Fondamentale and the University of Freiburg.

\urlstyle{same}

\bibliographystyle{plainnat}
\bibliography{sdn-locality}
\end{document}